\documentclass[journal]{IEEEtran}

\usepackage{graphicx}
\usepackage{amsmath,amssymb,amsthm}
\usepackage{enumerate}
\usepackage{enumitem}
\usepackage{booktabs}
\usepackage{algorithm}
\usepackage{algpseudocode}
\usepackage{cite}

\newtheorem{theorem}{Theorem}
\newtheorem{lemma}{Lemma}

\newtheorem{remark}{Remark}
\newtheorem{assumption}{Assumption}

\newcommand{\R}{\mathbb{R}}
\newcommand{\E}{\mathbb{E}}
\newcommand{\Proj}{\Pi}
\newcommand{\norm}[1]{\left\|#1\right\|}
\newcommand{\abs}[1]{\left|#1\right|}
\newcommand{\cQ}{\mathcal{Q}}
\newcommand{\cS}{\mathcal{S}}
\newcommand{\cU}{\mathcal{U}}
\newcommand{\cX}{\mathcal{X}}
\newcommand{\cN}{\mathcal{N}}
\newcommand{\cF}{\mathcal{F}}
\newcommand{\cT}{\mathcal{T}}

\begin{document}

\title{A Stateful Stochastic Allocation Mechanism with Fairness Guarantees\\
for Networked Electricity Systems}

\author{Shaun~Sweeney%
\thanks{S.~Sweeney is with the Dyson School of Design Engineering,
Imperial College London, UK.
Email: fight@enleashed.tech.
A companion paper, published in \textit{Energy Economics}~\cite{sweeney2026energyecon},
develops the economic interpretation of the mechanism, including
cost-causation pricing, distributional outcomes, revenue adequacy,
and comparisons against locational marginal pricing on a stylised
GB transmission network. The present paper focuses exclusively on
the control-theoretic properties of the mechanism: stability,
contraction, event-triggered execution, and dynamic fairness
convergence. None of Theorems~1--5 of the present paper appear
in the companion paper; the two papers are mathematically
self-contained with respect to each other.}}

\maketitle

% ------------------------------------------------------------------
\begin{abstract}
This paper develops and analyses the \emph{Fair Play Automatic Market
Maker} (FP-AMM), a programmable electricity allocation mechanism in
which scarcity allocation is treated as a controlled, stateful, and
auditable cyber-physical process. The problem addressed is that
dominant mechanisms such as locational marginal pricing are memoryless:
they carry no representation of historical service outcomes and therefore
cannot guarantee equitable treatment across market intervals. The
FP-AMM operates through a two-stage stochastic clearing rule---service-level
priority sampling followed by inverse-fairness weighting within each
tier---applied against a DC-OPF feasibility set, with bounded shortage
memory updated at each interval by a saturated integrator. Four main
results are established. First, the shortage-memory state is invariant
in $[0,1]^N$ and the update map is a contraction with explicit rate
$1-\beta$ (Theorem~\ref{thm:memory_bounded}). Second, the
intra-interval clearing operator converges linearly to a unique fixed
point with explicit factor $q \in (0,1)$
(Theorem~\ref{thm:operator_contraction}). Third, under the Fair Play
priority rule the per-node delivery ratio converges almost surely to
the contracted target $F^\star$, with a finite-time $O(1/\sqrt{T})$
bound and explicit constant established via a Lyapunov analysis of the
deficit recursion (Theorem~\ref{thm:fairness_convergence}). Fourth, event-triggered execution yields practical
ultimate boundedness of the allocation tracking error with an explicit
computation--fidelity trade-off (Theorem~\ref{thm:event_trigger}). The
mechanism is validated on IEEE 14-, 57-, and 118-bus networks over
$T = 5000$ market intervals: fairness convergence to $F^\star$ is
achieved on all benchmarks, Fair Play reduces peak weak-bus fairness
error by 54\% on the IEEE-57 network and up to 55\% relative to an
equal-weight baseline during scarcity windows, and DC feasibility is
maintained throughout.
\end{abstract}

\begin{IEEEkeywords}
Electricity markets, mechanism design, cyber-physical systems,
stochastic allocation, fairness, network control, automatic market maker,
event-triggered control, shortage memory
\end{IEEEkeywords}

% ======================================================================
\section{Introduction}
\label{sec:intro}
% ======================================================================

Electricity markets operate as cyber-physical systems in which
allocation decisions must simultaneously respect hard network
constraints, respond to volatile renewable supply and demand, and
deliver differentiated service levels under scarcity. The dominant
mechanism---locational marginal pricing (LMP)---collapses feasibility,
allocation, and settlement into nodal dual variables. Under historical
conditions of largely dispatchable supply and weakly active network
constraints, LMP provided a tractable approximation. Under high
renewable penetration, distributed energy resources, and persistently
active network constraints, this approximation breaks down: LMP signals
may be discontinuous, unbounded under scarcity, and memoryless with
respect to service history, offering no formal guarantee of equitable
treatment across market intervals~\cite{schweppe1988spot,stof2002pse,
wood2013power}.

This paper proposes a different paradigm. Market clearing is treated
as a stateful feedback process: the mechanism carries bounded memory
of historical service outcomes at each network node and uses it to
generate probabilistic allocation priorities that are explicitly fair,
physically feasible, and formally analysable as a networked control
system. The core mechanism is the \emph{Fair Play Automatic Market
Maker} (FP-AMM), in which a two-stage stochastic allocation rule
governs how residual flexible capacity is distributed across nodes
under scarcity.

\subsection{Physical and Control-Theoretic Motivation}

The central difficulty in modern electricity market design is the
decoupling of economic equilibrium from physical equilibrium. In
control-theoretic terms, the market historically supplied the outer
optimisation layer while primary and secondary frequency controls
regulated the inner physical dynamics. Variable renewables, distributed
energy resources, and active network constraints break this separation:
economic equilibrium of the market-clearing problem need not coincide
with physical equilibrium of the grid.

The FP-AMM addresses this by treating allocation as a cyber-physical
control loop with three composable layers: (i) a scarcity-responsive
AMM price signal that is bounded and Lipschitz by design; (ii)
service-level priority queues that implement contractual QoS guarantees
probabilistically, without hard capacity partitions; and (iii) a
stateful fairness memory that drives convergence of long-run delivery
ratios to contracted targets. This paper focuses exclusively on the
control-theoretic properties of the mechanism: convergence, stability,
and formal fairness guarantees. Economic evaluation---including
cost-causation analysis, distributional outcomes, bill dispersion,
generator revenue concentration, and comparisons with locational
marginal pricing on a stylised GB transmission network---is developed
in a companion paper published in \textit{Energy
Economics}~\cite{sweeney2026energyecon}. The present paper takes the
market mechanism as given and studies its properties as a networked
control system; none of the control-theoretic results (Theorems~1--5)
appear in the companion paper.

\subsection{Comparison with Prior Art}

\subsubsection{LMP and Primal-Dual Methods}
LMP arises as the dual variable of the nodal power-balance constraint
in OPF~\cite{schweppe1988spot,stof2002pse}. Standard primal--dual
methods produce price signals as dual variables; under scarcity these
may be discontinuous and lack convergence guarantees under time-varying
constraints~\cite{wood2013power,bergen2000power}. The FP-AMM differs
in three structural respects: its scarcity signal $p_t = f(\Delta_t)$
is saturated and Lipschitz by construction; its clearing operator
carries bounded internal state encoding service history; and fairness
is proved to converge in finite time rather than being asserted ex post.

\subsubsection{Online Resource Allocation and Network Utility
Maximisation}
Networked resource allocation has been extensively studied via
primal-dual and flow-control
frameworks~\cite{low1999flowcontrol,palomar2006tutorial}. These
typically establish convergence of an aggregate welfare objective under
stationarity assumptions, without explicit QoS tier differentiation or
historical fairness correction. The stochastic two-stage sampling rule
in the FP-AMM is closer to weighted fair queuing in network
scheduling~\cite{parekh1993generalized} but operates under physical
feasibility constraints inherited from the power system.

\subsubsection{Event-Triggered and Hybrid Control}
Event-triggered allocation and asynchronous market clearing have
attracted recent attention~\cite{tabuada2007eventtriggered,
heemels2013periodicetc}. The FP-AMM uses an event-driven update
schedule that reduces computation while maintaining bounded
service-history tracking. Theorem~\ref{thm:event_trigger} provides
explicit practical ultimate boundedness bounds under this schedule.

\subsection{Contributions}

This paper makes four contributions:
\begin{enumerate}[label=(\roman*)]
  \item \textbf{A stateful stochastic allocation mechanism}
        (Section~\ref{sec:mechanism}): a complete specification of the
        FP-AMM two-stage clearing rule for constrained networked
        systems, with bounded shortage memory and differentiated
        quality-of-service tiers. The mechanism is self-contained and
        does not defer its mathematical definition to a companion paper.
  \item \textbf{Contraction and practical stability guarantees}
        (Theorems~\ref{thm:memory_bounded}--\ref{thm:event_trigger}):
        the shortage-memory state is invariant and contractive; the
        intra-interval clearing operator converges linearly to a unique
        fixed point with explicit rate $q^k$; and event-triggered
        execution yields practical ultimate boundedness with an explicit
        computation--fidelity trade-off.
  \item \textbf{Almost-sure fairness convergence}
        (Theorem~\ref{thm:fairness_convergence}): under the Fair Play
        stochastic priority rule, the per-node long-run delivery ratio
        converges almost surely to $F^\star$ with finite-time
        $O(1/\sqrt{T})$ performance guarantees and explicit constant
        $C_\rho$, established via a Lyapunov analysis of the
        deficit recursion.
  \item \textbf{IEEE benchmark validation}
        (Section~\ref{sec:ieee}): the FP-AMM is validated on
        IEEE 14-, 57-, and 118-bus networks under DC-OPF feasibility
        constraints over $T = 5000$ market intervals. Fairness
        convergence to $F^\star$ is demonstrated on all benchmarks,
        with Fair Play ON reducing peak weak-bus error by up to
        55\% relative to an equal-weight baseline during scarcity
        windows. DC feasibility is maintained throughout; the
        event-triggered operator skips at least 25\% of intervals
        on every benchmark.
\end{enumerate}

The novelty of the FP-AMM does not lie in projected-gradient methods
themselves, which are standard, but in embedding a stateful fairness
recursion inside a feasibility-constrained stochastic allocation
mechanism and proving convergence of cumulative delivery ratios.
\section{Background and Related Work}
\label{sec:background}
% ======================================================================

\subsection{Network Feasibility and Market Clearing}

Operational market clearing is coupled to feasibility models
(OPF/SCOPF) that enforce power balance and network
constraints~\cite{wood2013power,bergen2000power}. Service
differentiation under scarcity is handled by additional rules outside
the price signal, without formal convergence guarantees on fairness
dynamics. The FP-AMM integrates these into a single mechanism with
formal guarantees.

\subsection{Fairness in Networked Allocation}

Fairness in networked resource allocation has been studied primarily
through max-min and proportional fairness
objectives~\cite{bertsekas1992data,kelly1998rate}, typically defined
with respect to a single-snapshot optimisation. The FP-AMM adopts a
dynamic fairness criterion: the target is that cumulative delivery
ratios converge to contracted shares over time, not merely that
instantaneous shares are proportional.

\subsection{Weighted Fair Queuing}

The two-stage stochastic sampling rule is structurally related to
weighted fair queuing (WFQ) and deficit round-robin scheduling in
packet networks~\cite{parekh1993generalized,shreedhar1996efficient}.
The FP-AMM extends WFQ to a physical power network with time-varying
feasibility constraints and coupled multi-interval shortage memory.
The key distinction is that QoS weights are dynamic: they incorporate
real-time historical fairness deficits rather than static contract
weights alone.

\subsection{Stochastic Approximation}

The fairness convergence result uses stochastic approximation
tools~\cite{robbins1951,borkar2008stochastic}. The shortage-memory
update plays the role of an averaging recursion; the convergence rate
follows from the contraction properties of the update map and a
martingale argument on the service-delivery sequence.

% ======================================================================
\section{Preliminaries and Notation}
\label{sec:prelims}
% ======================================================================

Let $\cN = \{1,\dots,N\}$ index nodes of the power network. Let
$t \in \{0,1,2,\dots\}$ index discrete market intervals
(e.g.\ 30~min). Let $\cQ_t$ denote the set of \emph{active flexible
requests}: requests whose permissible time windows intersect
$[t_0,t_1]$ and that have not yet been scheduled.

\subsection{Request Attributes}

Each request $i \in \cQ_t$ has: node $n(i) \in \cN$; service tier
$p(i) \in \{1,\dots,C\}$ (ordered premium-to-basic); power demand
$P_i > 0$ and duration $\Delta_i$ (in market slots); permissible
start window $[t_i^{\min}, t_i^{\max}]$; and historical fairness
ratio $F_{n(i)}(t) \in [0,1]$ (defined below).

\subsection{Scarcity Signal}

Let $d_t \in \R^N_{\ge 0}$ and $s_t^{\mathrm{avail}} \in \R^N_{\ge 0}$
denote aggregate requested demand and deliverable supply. The deficit
signal is
\begin{equation}
\Delta_t := d_t - s_t^{\mathrm{avail}},
\label{eq:deficit}
\end{equation}
with $\Delta_{t,n} > 0$ indicating shortage at node $n$. The AMM
broadcasts
\begin{equation}
p_t = f(\Delta_t), \qquad f(0) = 0,
\label{eq:price_law}
\end{equation}
where $f : \R^N \to \R^N$ is monotone, bounded ($\norm{f(\Delta)} \le
p_{\max}$), and $L_f$-Lipschitz. The canonical instantiation
$f(\Delta) = p_{\max}\tanh(\Delta/\sigma)$ has $L_f = p_{\max}/\sigma$.

\subsection{Feasibility Set}

The network feasibility set at interval $t$ is $\cU(x_t)$, consistent
with a DC-OPF relaxation of the power-flow
equations~\cite{wood2013power,bergen2000power}. After allocating
essential (non-curtailable) load, the residual flexible capacity at
node $n$ over $[t_0,t_1]$ is $S_{t,n}$.

\subsection{Historical Fairness Ratio and Shortage Memory}

Cumulative delivered and desired energy at node $n$ up to interval $t$
are $E_n^{\mathrm{del}}(t)$ and $E_n^{\mathrm{des}}(t)$. The
\emph{historical fairness ratio} is
\begin{equation}
F_n(t) := \frac{E_n^{\mathrm{del}}(t)}{E_n^{\mathrm{des}}(t)}
\in [0,1],
\label{eq:fairness_ratio}
\end{equation}
with $F_n(t) = 0$ when $E_n^{\mathrm{des}}(t) = 0$; note that under
Assumption~\ref{ass:participation} below, $E_n^{\mathrm{des}}(t) \ge
E_{\min} > 0$ for all $t \ge 1$, so this initialisation convention
applies only at $t = 0$. The target is $F^\star = 1$. The
\emph{fairness deficit} at node $n$ is
\begin{equation}
z_{n,t} := \max\!\bigl(0,\, F^\star - F_n(t)\bigr) \in [0,1],
\label{eq:fairness_deficit}
\end{equation}
and $z_t = (z_{n,t})_{n \in \cN} \in [0,1]^N$ is the
\emph{shortage-memory state}. The full system state is
$x_t = (x_t^{\mathrm{phys}}, z_t)$, where $x_t^{\mathrm{phys}}$
aggregates supply availability, demand, and network data.

% ======================================================================
\section{The Fair Play Automatic Market Maker}
\label{sec:mechanism}
% ======================================================================

The FP-AMM operates at each market interval $t$ as a two-stage
stochastic clearing rule on $\cQ_t$, followed by a shortage-memory
update.

\subsection{Stage 1: Service-Level Priority Sampling}

Service tiers are assigned non-negative priority weights
$w(1) \ge w(2) \ge \cdots \ge w(C) \ge 0$. The sub-queue for tier
$c$ is $\cQ_t^{(c)} := \{i \in \cQ_t : p(i) = c\}$. The mechanism
first selects a tier:
\begin{equation}
\Pr(\text{select tier } c) = \frac{w(c)}{\sum_{c'=1}^C w(c')}.
\label{eq:tier_sampling}
\end{equation}
Over repeated clearing events, tier-$c$ requests are attempted in
proportion to $w(c)$, without hard capacity partitioning.

\subsection{Stage 2: Inverse-Fairness Weighting Within Tier}

Given tier $c$, each request $i \in \cQ_t^{(c)}$ is assigned a
\emph{Fair Play priority score}:
\begin{equation}
S_{i,t} := w(p(i)) \cdot \bigl(\varepsilon_{\mathrm{b}} + z_{n(i),t}\bigr)^{\alpha_f},
\label{eq:priority_score}
\end{equation}
where $\varepsilon_{\mathrm{b}} > 0$ prevents starvation of well-served nodes and
$\alpha_f \ge 1$ controls sensitivity to historical deficits. Scores
are normalised:
\begin{equation}
\Pr(i \mid c) := \frac{S_{i,t}}{\sum_{j \in \cQ_t^{(c)}} S_{j,t}},
\qquad i \in \cQ_t^{(c)}.
\label{eq:selection_prob}
\end{equation}
Higher service tiers (larger $w(p(i))$) and more historically
under-served nodes (larger $z_{n(i),t}$) receive higher selection
probability.

\subsection{Feasibility Check and Scheduling}

A selected request $i$ is attempted by solving a local feasibility
problem: find $\tau_i \in [t_i^{\min}, t_i^{\max}]$ such that power
$P_i$ for duration $\Delta_i$ respects $S_{t,n(i)}$ and network
constraints. The cheapest-start criterion minimises the cost signal:
\begin{equation}
\tau_i^\star \in \arg\min_{\tau \in \cS_i}
  \sum_{s=\tau}^{\tau+\Delta_i-1} c[s],
\quad S_{t,n(i)}[s] \ge P_i \;\forall s.
\label{eq:cheapest_start}
\end{equation}
If no feasible start exists, $i$ is marked infeasible. Upon
acceptance, residual capacity is updated:
\begin{equation}
S_{t,n(i)}[s] \leftarrow S_{t,n(i)}[s] - P_i,
\quad s \in [\tau_i^\star,\, \tau_i^\star + \Delta_i - 1].
\label{eq:capacity_update}
\end{equation}

\subsection{Shortage-Memory Update}

After each interval, the fairness deficit is updated. Writing
$\ell_{n,t} := E_{n,t}^{\mathrm{des}} - E_{n,t}^{\mathrm{del}} \ge 0$
for the per-interval shortage, the running-average update is:
\begin{equation}
z_{n,t+1} = \Proj_{[0,1]}\!\bigl((1-\beta_t)\,z_{n,t}
  + \beta_t\,\ell_{n,t}\bigr),
\label{eq:memory_update}
\end{equation}
where $\beta_t = 1/(t+1)$ for a running average or $\beta \in (0,1)$
for exponential forgetting.

\subsection{Algorithm and Closed-Loop Representation}

Algorithm~\ref{alg:fair_play} summarises the complete FP-AMM clearing
procedure. The resulting closed-loop system is:
\begin{equation}
x_{t+1} =
\begin{bmatrix}
f_{\mathrm{phys}}\!\left(x_t^{\mathrm{phys}},u_t,w_t\right)\\[4pt]
\Proj_{[0,1]^N}\!\bigl((1-\beta_t)\,z_t + \beta_t\,\ell_t(u_t,x_t)\bigr)
\end{bmatrix},
\label{eq:closed_loop}
\end{equation}
where $u_t$ is the allocation produced by Algorithm~\ref{alg:fair_play}
and $w_t$ represents exogenous disturbances.

\begin{algorithm}[t]
\caption{FP-AMM Clearing at Node $n$, Interval $t$}
\label{alg:fair_play}
\begin{algorithmic}[1]
\Require Active queues $\cQ_t^{(c)}$; fairness ratios $F_{n'}(t)$;
  weights $w(c)$; residual capacity $S_{t,n}$.
\Ensure Accepted requests with allocated slots and powers.
\State Allocate essential load; remove from $S_{t,n}$.
\State Compute $z_{n(i),t}$ via~\eqref{eq:fairness_deficit} and
  $S_{i,t}$ via~\eqref{eq:priority_score} for each $i \in \cQ_t$.
\State Normalise to $\Pr(i \mid c)$ via~\eqref{eq:selection_prob}.
\While{$S_{t,n} > 0$ \textbf{and} $\cQ_t \neq \emptyset$}
  \State Sample tier $c$ via~\eqref{eq:tier_sampling}.
  \State Sample $i \in \cQ_t^{(c)}$ via~\eqref{eq:selection_prob}.
  \State Solve feasibility problem~\eqref{eq:cheapest_start} for $i$.
  \If{feasible}
    \State Schedule $(\tau_i^\star, P_i, \Delta_i)$; update $S_{t,n}$
      via~\eqref{eq:capacity_update}.
    \State Update $E_{n(i)}^{\mathrm{del}}(t)$; remove $i$ from $\cQ_t$.
  \Else
    \State Mark $i$ infeasible; remove from $\cQ_t$.
  \EndIf
\EndWhile
\State Update shortage memory via~\eqref{eq:memory_update} for all $n$.
\end{algorithmic}
\end{algorithm}

\begin{remark}[Separation of fairness and service potential]
\label{rem:separation}
A key structural feature of the FP-AMM is the clean separation
between the fairness mechanism and the service allocation objective.
Fairness enters exclusively through the priority
scores~\eqref{eq:priority_score} and the shortage-memory
update~\eqref{eq:memory_update}; it does not appear as an additive
perturbation to a gradient step or as a dual variable. This
design choice ensures that the same state variable does not appear
both in the objective gradient and as an external perturbation
simultaneously, which would create a coupled feedback requiring a
tighter step-size condition and complicating the convergence proof.
The separation enables independent analysis of contraction
(Section~\ref{sec:contraction}) and fairness convergence
(Section~\ref{sec:convergence}).
\end{remark}

% ======================================================================
\section{Shortage-Memory Stability and Operator Contraction}
\label{sec:contraction}
% ======================================================================

\subsection{Assumptions}

\begin{assumption}[Step size]
\label{ass:stepsize}
$\beta_t \in (0,1)$ satisfies: (a) $\beta_t = \beta \in (0,1)$
(exponential forgetting), or (b) $\beta_t = 1/(t+1)$ (running
average) with $\sum_t \beta_t = \infty$ and $\sum_t \beta_t^2 <
\infty$.
\end{assumption}

\begin{assumption}[Shortage boundedness]
\label{ass:shortage}
$\ell_{n,t} \in [0,1]$ for all $n$ and $t \ge 0$ (holds in standard
per-unit normalisation).
\end{assumption}

\begin{assumption}[Operator regularity]
\label{ass:operator}
$\cU(x)$ is nonempty, closed, and convex. The service potential
$\Psi(\cdot;x)$ is $\mu$-strongly convex with $L_\Psi$-Lipschitz
gradient. Stepsizes satisfy $0 < \eta < 2\mu/L_\Psi^2$.
\end{assumption}

\subsection{Shortage-Memory Invariance and Contraction}

\begin{theorem}[Shortage-memory invariance and contraction]
\label{thm:memory_bounded}
Under Assumptions~\ref{ass:stepsize}--\ref{ass:shortage}, let
$z_{n,0} \in [0,1]$ for all $n$.
\begin{enumerate}[label=(\alph*)]
  \item \emph{Invariance:} $z_{n,t} \in [0,1]$ for all $n,t \ge 0$.
  \item \emph{Contraction:} for any $z, z' \in [0,1]$ and fixed
        $\ell \in [0,1]$, the map
        $\phi_\beta : z \mapsto \Proj_{[0,1]}((1-\beta)z + \beta\ell)$
        satisfies
        \begin{equation}
        \abs{\phi_\beta(z) - \phi_\beta(z')} \le (1-\beta)\abs{z-z'}.
        \label{eq:memory_contraction}
        \end{equation}
  \item \emph{Unique fixed point:} for fixed $\ell$ and $\beta \in
        (0,1)$, $\phi_\beta$ has unique fixed point $z^\star = \ell$
        and $\abs{z_t - z^\star} \le (1-\beta)^t\abs{z_0-z^\star}$.
\end{enumerate}
\end{theorem}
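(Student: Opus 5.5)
The plan is to prove the three parts in order. Each uses only two facts: the convex-combination structure of the pre-projection update, and the fact that the projection $\Proj_{[0,1]}$ onto a closed interval is nonexpansive, i.e.\ $1$-Lipschitz. Concretely, $\Proj_{[0,1]}(y) = \min(1,\max(0,y))$ satisfies $\abs{\Proj_{[0,1]}(y)-\Proj_{[0,1]}(y')} \le \abs{y-y'}$. I will either verify this directly by checking the cases according to where $y,y'$ lie relative to $0$ and $1$, or cite it as the standard nonexpansiveness of Euclidean projection onto a closed convex set.

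For part~(a), I argue by induction on $t$. The base case is the hypothesis $z_{n,0}\in[0,1]$. For the inductive step, the update~\eqref{eq:memory_update} applies $\Proj_{[0,1]}$, so $z_{n,t+1}\in[0,1]$ holds trivially. It is worth noting the stronger fact as well. Under Assumption~\ref{ass:stepsize} we have $\beta_t\in(0,1)$, and Assumption~\ref{ass:shortage} gives $\ell_{n,t}\in[0,1]$. Hence the pre-projection quantity $(1-\beta_t)z_{n,t}+\beta_t\ell_{n,t}$ is a convex combination of two points of $[0,1]$ and already lies in the interval. So the projection is inactive along trajectories, which I will reuse below.

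For part~(b), I combine nonexpansiveness with linearity of the affine map $y\mapsto(1-\beta)y+\beta\ell$:
\[
\abs{\phi_\beta(z)-\phi_\beta(z')}\le\abs{(1-\beta)z+\beta\ell-(1-\beta)z'-\beta\ell}=(1-\beta)\abs{z-z'} .
\]
This gives~\eqref{eq:memory_contraction}. Since $\beta\in(0,1)$ is a fixed constant here, the rate $1-\beta<1$ makes $\phi_\beta$ a strict contraction on the complete metric space $[0,1]$.

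For part~(c), I have two routes. Banach's fixed-point theorem applied to part~(b) gives existence and uniqueness. More directly, $\phi_\beta(\ell)=\Proj_{[0,1]}(\ell)=\ell$ because $\ell\in[0,1]$; uniqueness then follows because two fixed points $z^\star,\tilde z$ would satisfy $\abs{z^\star-\tilde z}\le(1-\beta)\abs{z^\star-\tilde z}$, forcing equality. For the rate, I iterate: $\abs{z_{t+1}-z^\star}=\abs{\phi_\beta(z_t)-\phi_\beta(z^\star)}\le(1-\beta)\abs{z_t-z^\star}$, and induction on $t$ yields $(1-\beta)^t\abs{z_0-z^\star}$.

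The only real subtlety, more a matter of interpretation than a mathematical obstacle, is that~(c) requires $\ell$ held fixed across iterations and a constant step $\beta$. I will state explicitly that~(c) concerns the frozen-input recursion $z_{t+1}=\phi_\beta(z_t)$. It does not cover the closed loop~\eqref{eq:closed_loop}, where $\ell_t$ depends on $u_t$ and varies with $t$. Nor does it cover the running-average case $\beta_t=1/(t+1)$, where the contraction factor tends to $1$ and no uniform geometric rate is claimed.
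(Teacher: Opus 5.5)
Your proposal is correct and follows essentially the same route as the paper: induction with the convex-combination and projection argument for (a), nonexpansiveness of $\Proj_{[0,1]}$ composed with the affine map for (b), and the contraction for uniqueness and the geometric rate in (c). Two things you add make precise what the paper's proof leaves implicit: your direct check that $\phi_\beta(\ell)=\ell$, and your explicit restriction of (c) to the frozen-input recursion with constant $\beta$.
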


\begin{proof}
\textit{(a)} Since $\beta \in (0,1)$, $(1-\beta)z+\beta\ell \in
[0,1]$ for $z,\ell \in [0,1]$. Projection preserves the interval.
By induction, $z_{n,0} \in [0,1]$ implies $z_{n,t} \in [0,1]$.

\textit{(b)} Let $a = (1-\beta)z+\beta\ell$ and
$a' = (1-\beta)z'+\beta\ell$. Nonexpansiveness of the projection gives
$\abs{\phi_\beta(z)-\phi_\beta(z')} \le \abs{a-a'} = (1-\beta)\abs{z-z'}$.

\textit{(c)} At the fixed point $z = \Proj_{[0,1]}((1-\beta)z+\beta\ell)$,
the unique solution for $\ell \in [0,1]$ is $z^\star = \ell$. Linear
convergence at rate $(1-\beta)^t$ follows from the contraction.
\end{proof}

\subsection{Intra-Interval Operator Contraction}

The intra-interval clearing update within each market interval $t$,
for fixed state $x$, is:
\begin{equation}
u_t^{k+1} = \cT(u_t^k;\,x_t), \qquad k = 0,\dots,K_t-1;
\qquad u_t = u_t^{K_t},
\label{eq:iter_map}
\end{equation}
where the operator $\cT$ implements feasibility projection and
service allocation:
\begin{equation}
\cT(u;x) = \Proj_{\cU(x)}\!\bigl(u - \eta\,\nabla_u\Psi(u;x)\bigr).
\label{eq:operator}
\end{equation}
The service potential $\Psi(u;x)$ encodes tier priority weights and
the scarcity feedback $p = f(\Delta)$; a canonical instantiation is
\begin{equation}
\Psi(u;x) = -\sum_i w(p(i))\,s_i(u) + \frac{\varepsilon_{\mathrm{r}}}{2}\norm{u}^2,
\label{eq:Psi}
\end{equation}
where $s_i(u)$ is the served energy of agent $i$, $w(p(i))$ is the
tier weight, and $\varepsilon_{\mathrm{r}} > 0$ ensures $\mu$-strong convexity with
$\mu = \varepsilon_{\mathrm{r}}$. The fairness correction enters through the
priority scores~\eqref{eq:priority_score} that determine which
requests are attempted, not as an additive perturbation to~\eqref{eq:operator}.

The inner-iteration count $K_t \ge 1$ in~\eqref{eq:iter_map} is
a design parameter specifying how many gradient steps are taken
per market interval before the allocation is applied. A constant
$K_t \equiv K$ is used in the analysis and simulations; adaptive
schedules (increasing $K_t$ when $\Delta_t$ is large) are an
extension discussed in Section~\ref{sec:ieee}.

\begin{remark}[Relaxed dispatch model and strong convexity of $\Psi$]
\label{rem:relaxed_dispatch}
The strong convexity of $\Psi(\cdot;x)$ and Lipschitz continuity of
$\nabla_u\Psi$ require $s_i(u)$ affine in $u$ on $\cU(x)$, which
holds in the \emph{relaxed dispatch model}: $\cU(x)$ is a DC-OPF
polytope, $s_i(u) = B_i^\top u$, giving $L_\Psi =
\norm{B^\top B + \varepsilon_{\mathrm{r}} I}$ and $\mu = \varepsilon_{\mathrm{r}}$~\cite{wood2013power,bergen2000power}.
For non-linear (AC) power flow models, strong convexity holds locally
and the contraction applies to a convex relaxation.
\end{remark}

\begin{theorem}[Intra-interval operator contraction]
\label{thm:operator_contraction}
Under Assumption~\ref{ass:operator}, fix $x \in \cX$.
Then $\cT(\cdot;x)$ is a contraction on $\cU(x)$ with factor
\begin{equation}
q := \bigl(1 - 2\eta\mu + \eta^2 L_\Psi^2\bigr)^{1/2} \in (0,1),
\label{eq:contraction_factor}
\end{equation}
and admits a unique fixed point $u^\star(x) \in \cU(x)$.
For all $u^0 \in \cU(x)$,
\begin{equation}
\norm{u^k - u^\star(x)} \le q^k\,\norm{u^0 - u^\star(x)}.
\label{eq:linear_rate}
\end{equation}
\end{theorem}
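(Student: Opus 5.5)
The plan is the standard projected-gradient contraction argument, in three steps: nonexpansiveness of the projection, a strong-convexity/Lipschitz estimate for the gradient step, then Banach's fixed-point theorem.

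\textbf{Step 1 (reduce to the gradient step).} Write $G(u) := u - \eta\nabla_u\Psi(u;x)$, so that $\cT(u;x) = \Proj_{\cU(x)}(G(u))$. By Assumption~\ref{ass:operator}, $\cU(x)$ is nonempty, closed and convex. Hence the Euclidean projection onto $\cU(x)$ is well defined and nonexpansive, which gives
\begin{equation*}
\norm{\cT(u;x)-\cT(v;x)} \le \norm{G(u)-G(v)} .
\end{equation*}

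\textbf{Step 2 (contraction of $G$).} Expand the square and write $g_u := \nabla_u\Psi(u;x)$, $g_v := \nabla_u\Psi(v;x)$:
\begin{equation*}
\norm{G(u)-G(v)}^2 = \norm{u-v}^2 - 2\eta\langle g_u - g_v,\,u-v\rangle + \eta^2\norm{g_u-g_v}^2 .
\end{equation*}
Bound the two correction terms separately.
\begin{itemize}
\item Strong monotonicity of the gradient of a $\mu$-strongly convex function gives $\langle g_u-g_v,u-v\rangle \ge \mu\norm{u-v}^2$.
\item The $L_\Psi$-Lipschitz property gives $\norm{g_u-g_v}^2 \le L_\Psi^2\norm{u-v}^2$.
\end{itemize}
Together these yield $\norm{G(u)-G(v)}^2 \le (1-2\eta\mu+\eta^2L_\Psi^2)\norm{u-v}^2 = q^2\norm{u-v}^2$.

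Next check that $q\in(0,1)$. The condition $q<1$ is equivalent to $\eta(\eta L_\Psi^2-2\mu)<0$, i.e.\ $0<\eta<2\mu/L_\Psi^2$, which is exactly the stepsize condition assumed.

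For $q>0$ (strict positivity, not just nonnegativity), use $\mu\le L_\Psi$. This holds because strong monotonicity and Cauchy--Schwarz give $\mu\norm{u-v}^2\le \norm{g_u-g_v}\norm{u-v}\le L_\Psi\norm{u-v}^2$. It follows that
\begin{equation*}
1-2\eta\mu+\eta^2L_\Psi^2 \ge 1-2\eta\mu+\eta^2\mu^2 = (1-\eta\mu)^2 \ge 0 .
\end{equation*}
Equality can hold only in the degenerate case $L_\Psi=\mu$ with $\eta=1/\mu$. There $q=0$ and $\cT$ is constant, which is still a contraction and harmless for the conclusion; I would note this edge case in a remark rather than treat it as an obstacle.

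\textbf{Step 3 (fixed point and rate).} By Steps 1--2, $\cT(\cdot;x)$ maps the closed set $\cU(x)$ into itself, since its values are projections onto $\cU(x)$. As a closed subset of a Euclidean space, $\cU(x)$ is complete, and $\cT$ is $q$-Lipschitz with $q<1$. Banach's fixed-point theorem then gives a unique fixed point $u^\star(x)\in\cU(x)$. Applying the contraction inequality with $v=u^\star(x)=\cT(u^\star(x);x)$ and inducting on $k$ gives
\begin{equation*}
\norm{u^{k}-u^\star(x)} \le q\,\norm{u^{k-1}-u^\star(x)} \le q^k\,\norm{u^0-u^\star(x)},
\end{equation*}
which is the rate claimed in Theorem~\ref{thm:operator_contraction}.

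The only step needing care is Step 2: the sign conditions on $\eta$, and the fact that strong convexity must hold on a convex set containing $\cU(x)$ (as in Remark~\ref{rem:relaxed_dispatch}) so that the monotonicity inequality applies to all pairs $u,v\in\cU(x)$. The rest is routine.
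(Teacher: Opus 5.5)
Your proposal is correct and follows the paper's proof: nonexpansiveness of $\Proj_{\cU(x)}$, the Lipschitz bound $(1-2\eta\mu+\eta^2L_\Psi^2)^{1/2}$ for the gradient step, and Banach's fixed-point theorem. The paper simply asserts that gradient-step bound as a standard fact, whereas you derive it by expanding the square. Your check that $q\in[0,1)$ is also more careful than the paper, which states $q\in(0,1)$ without comment. Your edge case $L_\Psi=\mu$, $\eta=1/\mu$ (where $q=0$) shows that the strict lower bound in the statement can fail, although the contraction, fixed-point and rate conclusions are unaffected.
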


\begin{proof}
Let $u, v \in \cU(x)$. Since $\Psi(\cdot;x)$ is $\mu$-strongly
convex with $L_\Psi$-Lipschitz gradient, the gradient-step map
$u \mapsto u - \eta\nabla_u\Psi(u;x)$ has Lipschitz constant
$\rho = (1-2\eta\mu+\eta^2 L_\Psi^2)^{1/2} \in (0,1)$ for
$\eta < 2\mu/L_\Psi^2$. Since Euclidean projection is
nonexpansive:
\[
\norm{\cT(u;x)-\cT(v;x)} \le \rho\norm{u-v} =: q\norm{u-v}.
\]
With $q < 1$, $\cT(\cdot;x)$ is a contraction. By the Banach
fixed-point theorem it has a unique fixed point $u^\star(x)$,
and iterates converge at rate $q^k$.
\end{proof}

\subsection{Lipschitz Sensitivity of the Fixed Point}

\begin{theorem}[Lipschitz sensitivity]
\label{thm:lipschitz}
Let Assumption~\ref{ass:operator} hold uniformly over $\cX$.
Suppose additionally
\begin{equation}
\norm{\cT(u;x) - \cT(u;x')} \le L_x\norm{x-x'},
\quad \forall u,x,x' \in \cX.
\label{eq:T_lipschitz_x}
\end{equation}
Then
\begin{equation}
\norm{u^\star(x) - u^\star(x')} \le \frac{L_x}{1-q}\norm{x-x'}.
\label{eq:ustar_lipschitz}
\end{equation}
\end{theorem}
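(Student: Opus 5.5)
The proof uses the standard fixed-point perturbation argument, based on the uniform contraction from Theorem~\ref{thm:operator_contraction}. Fix $x,x' \in \cX$ and write $u^\star = u^\star(x)$ and $u'^\star = u^\star(x')$. By Theorem~\ref{thm:operator_contraction}, these are the unique fixed points of $\cT(\cdot;x)$ on $\cU(x)$ and of $\cT(\cdot;x')$ on $\cU(x')$. Since Assumption~\ref{ass:operator} holds uniformly over $\cX$, the same $\mu$, $L_\Psi$ and $\eta$ apply at every state. Hence a single factor $q\in(0,1)$, given by~\eqref{eq:contraction_factor}, serves as the contraction constant for both operators.

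The key step is the identity $u^\star - u'^\star = \cT(u^\star;x) - \cT(u'^\star;x')$. I would insert the intermediate term $\cT(u'^\star;x)$ and apply the triangle inequality:
\[
\norm{u^\star - u'^\star} \le \norm{\cT(u^\star;x) - \cT(u'^\star;x)} + \norm{\cT(u'^\star;x) - \cT(u'^\star;x')}.
\]
The second term is at most $L_x\norm{x-x'}$ by hypothesis~\eqref{eq:T_lipschitz_x}. The first term should be at most $q\norm{u^\star - u'^\star}$ by the contraction property. Rearranging gives $(1-q)\norm{u^\star-u'^\star} \le L_x\norm{x-x'}$, and dividing by $1-q>0$ yields~\eqref{eq:ustar_lipschitz}.

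The main obstacle is the first term. Theorem~\ref{thm:operator_contraction} establishes contraction only for points of $\cU(x)$, and $u'^\star$ lies in $\cU(x')$, which need not be contained in $\cU(x)$. I would resolve this by going back to the proof of Theorem~\ref{thm:operator_contraction}. There, the Lipschitz bound $\rho$ on the gradient-step map $u\mapsto u-\eta\nabla_u\Psi(u;x)$ is derived from strong convexity and gradient Lipschitzness of $\Psi(\cdot;x)$. Provided these hold on the ambient space (as in the quadratic instantiation~\eqref{eq:Psi} of Remark~\ref{rem:relaxed_dispatch}), the bound holds for all $u,v$ in that space. Euclidean projection onto the closed convex set $\cU(x)$ is nonexpansive on the whole space, so $\cT(\cdot;x)$ is a $q$-contraction everywhere and the first term is controlled.

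If one insists on working only within the feasible sets, there is an alternative: read hypothesis~\eqref{eq:T_lipschitz_x}, which is stated for all $u$, as implicitly requiring $\cT(\cdot;x)$ to be defined and contractive on a common domain containing $\bigcup_{x}\cU(x)$. I would state this interpretation explicitly. A sanity check is the case $\cU(x)\equiv\cU$ with dependence on $x$ only through $\Psi$. There the argument goes through verbatim, and the bound reduces to the familiar estimate $\norm{u^\star(x)-u^\star(x')}\le \frac{1}{1-q}\sup_u\norm{\cT(u;x)-\cT(u;x')}$.
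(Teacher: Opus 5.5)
Your argument is the same as the paper's: insert $\cT(u^\star(x');x)$, bound the first difference by $q\norm{u^\star(x)-u^\star(x')}$ via contraction and the second by $L_x\norm{x-x'}$ via \eqref{eq:T_lipschitz_x}, then rearrange. You also note that $u^\star(x')\in\cU(x')$ need not lie in $\cU(x)$, and resolve this with a global $q$-contraction (the gradient-step bound on the ambient space plus nonexpansiveness of $\Proj_{\cU(x)}$ everywhere); the paper's proof uses this step without comment, so your version is the more careful one.
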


\begin{proof}
Since $u^\star(x) = \cT(u^\star(x);x)$ and
$u^\star(x') = \cT(u^\star(x');x')$, the triangle inequality gives
\begin{align*}
\norm{u^\star(x)-u^\star(x')}
&\le \norm{\cT(u^\star(x);x)-\cT(u^\star(x');x)}\\
&\quad+\norm{\cT(u^\star(x');x)-\cT(u^\star(x');x')}.
\end{align*}
Applying contraction (factor $q$) to the first term and
\eqref{eq:T_lipschitz_x} to the second:
$(1-q)\norm{u^\star(x)-u^\star(x')} \le L_x\norm{x-x'}$.
\end{proof}

\begin{remark}[On condition~\eqref{eq:T_lipschitz_x}]
\label{rem:licq}
Lipschitz dependence of $\Proj_{\cU(x)}$ on $x$ holds for DC-OPF
feasible sets under standard constraint qualification conditions
(LICQ or strong regularity) at primal-dual
solutions~\cite{bonnans2000perturbation}. For the IEEE benchmark
networks used in Section~\ref{sec:ieee}, the DC-OPF polytope satisfies
LICQ generically: constraint degeneracy (simultaneous binding of
linearly dependent constraints) has measure zero in the parameter
space and is not observed in any of the 5000-interval simulations
reported. In this case $L_x$ depends on the network admittance
matrix and constraint density, and can be bounded from problem data.
When constraint qualifications are not uniformly satisfied,
Theorem~\ref{thm:lipschitz} applies conditionally over any
sub-domain of $\cX$ where~\eqref{eq:T_lipschitz_x} holds.
\end{remark}

\subsection{Event-Triggered Execution and Practical Boundedness}

The FP-AMM uses an event-driven update schedule: clearing fires when
\begin{equation}
\norm{x_t - x_{t^-}} \ge \delta,
\label{eq:event_trigger}
\end{equation}
where $t^-$ is the last trigger time; otherwise $u_t := u_{t^-}$.

\begin{theorem}[Practical ultimate boundedness]
\label{thm:event_trigger}
Let $L_\star := L_x/(1-q)$. Under the event-triggered
scheme~\eqref{eq:event_trigger} with $K \ge 1$ inner iterations
per trigger, the tracking error $e_t := \norm{u_t - u^\star(x_t)}$
satisfies:
\begin{enumerate}[label=(\alph*)]
  \item \emph{At trigger times:}
        $e_t \le q^K e_{t^-} + L_\star\norm{x_t - x_{t^-}}.$
  \item \emph{Between triggers:}
        $e_t \le e_{t^-} + L_\star\norm{x_t - x_{t^-}}.$
  \item \emph{Ultimate bound:} for constant $K$ and inter-trigger
        state variation bounded by $\bar\delta$,
        \begin{equation}
        \limsup_{j\to\infty} e_{t_j}
        \le \frac{L_\star\,\bar\delta}{1-q^K}.
        \label{eq:ultimate_bound}
        \end{equation}
\end{enumerate}
This is \emph{practical ultimate boundedness}: tracking error is
proportional to the worst-case inter-trigger state variation,
tunable through $(\delta, K)$.
\end{theorem}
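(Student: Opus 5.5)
The plan is a triangle-inequality argument that splits the tracking error into an optimisation term and a drift term. Theorem~\ref{thm:operator_contraction} controls the first, Theorem~\ref{thm:lipschitz} controls the second, and a scalar linear recursion then gives part (c).

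\textbf{Part (a).} Fix a trigger time $t$ with previous trigger $t^-$. Warm-start the $K$ inner iterations~\eqref{eq:iter_map} at $u_t^0 = u_{t^-}$. The first step is to insert the intermediate point $u^\star(x_t)$ and compare with $u^\star(x_{t^-})$.

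Applying~\eqref{eq:linear_rate} at the frozen state $x_t$ gives $e_t \le q^K\norm{u_{t^-} - u^\star(x_t)}$. The triangle inequality then gives
\[
\norm{u_{t^-} - u^\star(x_t)} \le e_{t^-} + \norm{u^\star(x_{t^-}) - u^\star(x_t)},
\]
and Theorem~\ref{thm:lipschitz} bounds the last term by $L_\star\norm{x_t - x_{t^-}}$. Together these yield $e_t \le q^K e_{t^-} + q^K L_\star\norm{x_t-x_{t^-}}$. Since $q^K\le 1$, this is stronger than the claimed bound.

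One technical point must be settled first: $u_{t^-}\in\cU(x_t)$ is needed for Theorem~\ref{thm:operator_contraction} to apply. I would handle it by noting that one application of $\cT(\cdot;x_t)$ lands in $\cU(x_t)$. Nonexpansiveness of the projection and the Lipschitz bound on the gradient step show that the contraction estimate still holds from an arbitrary starting point in the ambient space, provided $\Psi(\cdot;x_t)$ is strongly convex on the whole space. If that is unavailable, I would instead assume $\cU$ is locally constant in $x$ or that the iterate is projected first, and state this assumption.

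\textbf{Part (b).} Between triggers $u_t=u_{t^-}$, so the same triangle inequality plus Theorem~\ref{thm:lipschitz} gives (b) directly, with no contraction factor.

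\textbf{Part (c).} Index trigger times $t_j$ and set $a_j := e_{t_j}$. The hypothesis is $\norm{x_{t_j}-x_{t_{j-1}}}\le\bar\delta$. This is where I expect the main obstacle. Part (a) relates $e_{t_j}$ to the error $e$ at the previous \emph{trigger} $t_{j-1}$, not at an intermediate time. So the bound must be chained from trigger to trigger using the raw difference $x_{t_j}-x_{t_{j-1}}$, rather than by accumulating the per-step bounds of part (b), which could inflate the constant.

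With that identification, (a) gives the recursion
\[
a_j \le q^K a_{j-1} + L_\star\bar\delta .
\]
Unrolling yields
\[
a_j \le q^{Kj}a_0 + L_\star\bar\delta\sum_{i<j}q^{Ki},
\]
and since $q^K<1$ by Theorem~\ref{thm:operator_contraction}, taking $\limsup$ gives~\eqref{eq:ultimate_bound}.

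Finally, I would remark on how $\delta$ enters. The trigger rule~\eqref{eq:event_trigger} guarantees only that the variation reaches $\delta$ at firing, so $\bar\delta$ is a separate hypothesis. If the state increments are uniformly bounded by $D$, then $\bar\delta\le\delta+D$, which recovers the stated computation--fidelity trade-off in $(\delta,K)$.
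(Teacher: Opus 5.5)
Your argument is correct. Parts (b) and (c) are the same as the paper's; the real difference is how the trigger-time update is modelled in (a).

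\textbf{The paper's route.} It runs the $K$ inner iterations with the stale operator $\cT(\cdot;x_{t^-})$ from $u_{t^-}$. It contracts toward the old fixed point, giving $\norm{u_t-u^\star(x_{t^-})}\le q^K e_{t^-}$, and only then adds the drift $\norm{u^\star(x_{t^-})-u^\star(x_t)}\le L_\star\norm{x_t-x_{t^-}}$. Because the iterations never see $x_t$, the drift term is undamped, which is exactly the form stated in (a).

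\textbf{Your route.} You iterate with the current operator $\cT(\cdot;x_t)$, as in \eqref{eq:iter_map}, and apply Theorem~\ref{thm:lipschitz} to the warm-start error before contracting. The drift is therefore damped too, giving $e_t\le q^K\bigl(e_{t^-}+L_\star\norm{x_t-x_{t^-}}\bigr)$. Carried into (c), this yields an ultimate bound smaller by a factor $q^K$.

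\textbf{Feasibility.} The cost of your route is that you invoke Theorem~\ref{thm:operator_contraction} from a start $u_{t^-}$ that need not lie in $\cU(x_t)$. Your fix is sound: the gradient-step estimate in that proof uses only strong convexity and gradient Lipschitzness, so it extends off $\cU(x)$ when these hold on the ambient space, as they do for \eqref{eq:Psi} with affine $s_i$. The paper's route has the same issue: it needs $u_{t^-}\in\cU(x_{t^-})$, which its own update rule (iterating at an earlier state) does not guarantee, and it does not address this.

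\textbf{Connection to $\delta$.} Your closing observation, $\bar\delta\le\delta+D$ under bounded per-step increments, is not in the paper. It is what actually links the hypothesis $\bar\delta$ to the tuning parameter $\delta$ in the claimed computation--fidelity trade-off.
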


\begin{proof}
\textit{(a)} At trigger $t$, $u_t$ follows $K$ iterations of
$\cT(\cdot;x_{t^-})$ from $u_{t^-}$. Theorem~\ref{thm:operator_contraction}
gives $\norm{u_t - u^\star(x_{t^-})} \le q^K e_{t^-}$. Adding
$\norm{u^\star(x_{t^-})-u^\star(x_t)} \le L_\star\norm{x_{t^-}-x_t}$
(Theorem~\ref{thm:lipschitz}) by the triangle inequality gives (a).
\textit{(b)} When $u_t = u_{t^-}$, use $e_t \le e_{t^-} +
L_\star\norm{x_t-x_{t^-}}$ directly from Theorem~\ref{thm:lipschitz}.
\textit{(c)} The sequence $e_{t_j} \le q^K e_{t_{j-1}} + L_\star\bar\delta$
has $\limsup \le L_\star\bar\delta/(1-q^K)$.
\end{proof}

\begin{remark}[Stability notion and trigger threshold]
\label{rem:correction}
The ultimate bound~\eqref{eq:ultimate_bound} involves the
\emph{inter-trigger state variation} $\bar\delta$, not the trigger
threshold $\delta$: the trigger governs \emph{when} updates fire,
while $\bar\delta$ bounds the actual state change between consecutive
trigger times. Theorem~\ref{thm:event_trigger} establishes
\emph{practical ultimate boundedness}, not asymptotic stability:
when $\bar\delta > 0$, $e_t$ converges to a neighbourhood of zero
of radius $L_\star\bar\delta/(1-q^K)$. Asymptotic stability would
require $\bar\delta = 0$ or $K \to \infty$. The pair $(\delta, K)$
tunes this trade-off explicitly through~\eqref{eq:ultimate_bound}.
\end{remark}

% ======================================================================
\section{Fairness Convergence}
\label{sec:convergence}
% ======================================================================

\subsection{Assumptions}

\begin{assumption}[Persistent participation]
\label{ass:participation}
Each node $n$ submits at least one flexible request per market
interval, so that
\[
E_n^{\mathrm{des}}(t) \ge E_{\min} > 0, \qquad \forall t \ge 1.
\]
This assumption can be relaxed by defining the fairness ratio $F_n(t)$
over active intervals only, with the convergence argument applied to
the subsequence of intervals in which node $n$ participates.
\end{assumption}

\begin{assumption}[Feasibility floor]
\label{ass:feasibility}
For each node $n$ and interval $t$,
\[
\Pr(\text{feasible request exists for node }n) \ge \rho > 0.
\]
\end{assumption}

Assumption~\ref{ass:feasibility} is standard in persistent-service
stochastic approximation and excludes degenerate cases in which a
node is permanently disconnected from the network or physically
incapable of receiving supply. In power systems terms, it requires
that the network is not permanently congested at any node: there
must exist at least some intervals in which supply can reach node
$n$ through the DC-OPF feasible set $\cU(x_t)$.

\begin{assumption}[Priority regularity]
\label{ass:priority}
The priority score satisfies
\[
S_{i,t} \ge \varepsilon_{\mathrm{b}}^{\alpha_f} > 0, \qquad \forall i, t.
\]
\end{assumption}

\subsection{Stochastic Recursion for the Fairness Deficit}

The fairness recursion is connected to the clearing operator of
Theorem~\ref{thm:operator_contraction} as follows. At each interval
$t$, the clearing operator runs $K$ iterations and produces
$u_t = u_t^{K_t}$, which by Theorem~\ref{thm:operator_contraction}
satisfies $\norm{u_t - u^\star(x_t)} \le q^K\norm{u_{t^-} -
u^\star(x_{t^-})} + L_\star\bar\delta$. The resulting allocation
$u_t$ determines the served energy $s_{n,t}$ at each node, which
drives the shortage $\ell_{n,t} = d_{n,t} - s_{n,t}$ and hence the
deficit update~\eqref{eq:memory_update}. The role of
Theorem~\ref{thm:operator_contraction} in the fairness analysis is
therefore to ensure that $u_t$ is a near-optimal allocation: the
tracking bound guarantees that the served energy $s_{n,t}$ is
close to the feasibility-constrained optimum $u^\star(x_t)$, so
the martingale noise $M_{n,t+1}$ in the deficit recursion below is
bounded by the combination of stochastic demand fluctuations and
the residual tracking error. In particular, $\sigma_M^2$ in the
recursion below implicitly depends on $L_\star\bar\delta/(1-q^K)$
through the tracking error bound: tighter event-triggered
execution (larger $K$ or smaller $\delta$) reduces $\sigma_M^2$
and hence tightens the convergence constant $C_\rho$
in~\eqref{eq:Crho}.

\begin{remark}[Bounding $\sigma_M^2$ from Assumption~\ref{ass:shortage}]
\label{rem:sigma_bound}
The per-interval shortage satisfies $\ell_{n,t} \in [0,1]$ by
Assumption~\ref{ass:shortage}. Since $\ell_{n,t}$ is bounded in
$[0,1]$ and the served energy $s_{n,t}$ satisfies $s_{n,t} \in
[0, d_{n,t}]$, the martingale increment $M_{n,t+1}$ is also bounded
in $[-1,1]$. By the bounded martingale property,
$\E[M_{n,t+1}^2 \mid \cF_t] \le 1$, so $\sigma_M^2 \le 1$. This
bound is implicit in Assumption~\ref{ass:shortage} and is used
to give an explicit value $\sigma_M \le 1$ in the finite-time
constant $C_\rho$ of~\eqref{eq:Crho}.
\end{remark}

Define the \emph{fairness deficit} at node $n$ as
\begin{equation}
e_n(t) := F^\star - F_n(t) \ge 0.
\label{eq:fairness_error}
\end{equation}
Under the Fair Play mechanism, the deficit evolves as
\begin{equation}
e_n(t+1) = e_n(t) - \beta_t h_n(e_n(t)) + \beta_t M_{n,t+1},
\label{eq:fairness_recursion}
\end{equation}
where $h_n(\cdot)$ denotes the expected deficit correction induced by
the stochastic priority rule, and $M_{n,t+1}$ is a
martingale-difference sequence satisfying
\[
\E[M_{n,t+1} \mid \cF_t] = 0, \qquad
\E[M_{n,t+1}^2 \mid \cF_t] \le \sigma_M^2 \le 1.
\]
The bound $\sigma_M^2 \le 1$ follows from Remark~\ref{rem:sigma_bound}.
The derivation of~\eqref{eq:fairness_recursion} follows from
expanding $F_n(t+1)$ using the shortage-memory
update~\eqref{eq:memory_update} and identifying the conditional mean
and martingale components of the per-interval shortage $\ell_{n,t}$.

\begin{assumption}[Positive correction drift]
\label{ass:drift}
The expected correction satisfies
\begin{equation}
h_n(e)\,e > 0, \qquad \forall e > 0.
\label{eq:positive_drift}
\end{equation}
Moreover, there exists $c_h > 0$ such that
\begin{equation}
h_n(e)\,e \ge c_h\,e^2, \qquad \forall e \in [0,1].
\label{eq:strong_drift}
\end{equation}
\end{assumption}

\subsection{Verification of Assumption~\ref{ass:drift} for Fair Play}

\begin{lemma}[Drift verification for Fair Play]
\label{lem:drift_verification}
Let the priority score be given by~\eqref{eq:priority_score}:
$S_{i,t} = w(p(i))\,(\varepsilon_{\mathrm{b}} + z_{n(i),t})^{\alpha_f}$
with $z_{n,t} = e_n(t) = F^\star - F_n(t)$, $\alpha_f \ge 1$,
$\varepsilon_{\mathrm{b}} > 0$. Under Assumptions~\ref{ass:feasibility}
and~\ref{ass:priority}, Assumption~\ref{ass:drift} holds with
\begin{equation}
c_h = \frac{\rho\,\alpha_f\,w_{\min}}
            {|\cQ^{(c)}|_{\max}\,(\varepsilon_{\mathrm{b}} + 1)^{\alpha_f}},
\label{eq:ch_explicit}
\end{equation}
where $w_{\min} = \min_c w(c) > 0$ and $|\cQ^{(c)}|_{\max}$ is an
upper bound on the tier queue size.
\end{lemma}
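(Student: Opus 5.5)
The plan is to write the drift $h_n(e)$ as an excess-selection gain. It is the increment in expected delivered share that node $n$ receives because its priority score rises with its deficit, compared with its score at zero deficit. I will then lower-bound this gain by a linear function of $e$ via the mean value theorem applied to $g(z) := (\varepsilon_{\mathrm{b}} + z)^{\alpha_f}$. Concretely, I condition on $\cF_t$ and fix the tier $c$ of a request $i$ at node $n$. The selection probability~\eqref{eq:selection_prob} is $S_{i,t}/\sum_{j} S_{j,t}$. Since each $z_{n',t}\in[0,1]$ by Theorem~\ref{thm:memory_bounded}(a), every score in the tier is at most $w(c)(\varepsilon_{\mathrm{b}}+1)^{\alpha_f}$, so the denominator is at most $|\cQ^{(c)}|_{\max}\, w(c)(\varepsilon_{\mathrm{b}}+1)^{\alpha_f}$. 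The denominator is also bounded below by Assumption~\ref{ass:priority}, so the ratio is well defined.

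The first key step is to identify the deficit-driven component of the expected correction. I take $h_n(e)$ to be the conditional expected reduction in $e_n$ per unit step, as in the recursion~\eqref{eq:fairness_recursion}. This equals the probability that node $n$ is selected \emph{and} a feasible slot exists, measured relative to the baseline in which $z_{n,t}=0$. By Assumption~\ref{ass:feasibility}, feasibility contributes a factor of at least $\rho$. For the selection part, the mean value theorem gives
\[
g(e)-g(0) = \alpha_f(\varepsilon_{\mathrm{b}}+\xi)^{\alpha_f-1} e \ge \alpha_f\,\varepsilon_{\mathrm{b}}^{\alpha_f-1} e
\]
for some $\xi\in(0,e)$. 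Because $\alpha_f \ge 1$, one could crudely bound this by $\alpha_f e$ times a constant absorbed into the normalisation. Dividing by the denominator bound then yields a gain of at least $\rho\,\alpha_f\, w(c)\, e /(|\cQ^{(c)}|_{\max}(\varepsilon_{\mathrm{b}}+1)^{\alpha_f})$. Taking the worst tier, via $w(c)\ge w_{\min}$, recovers the constant $c_h$ in~\eqref{eq:ch_explicit}. Multiplying through by $e$ gives $h_n(e)e \ge c_h e^2$, which is~\eqref{eq:strong_drift}. Strict positivity~\eqref{eq:positive_drift} then follows immediately for $e>0$, since $c_h>0$.

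The main obstacle is rigorously matching the abstract drift $h_n$ to this selection-probability increment. There are three difficulties. The denominator also contains node $n$'s own score. Other nodes' deficits move the normalisation. And the factor $\varepsilon_{\mathrm{b}}^{\alpha_f-1}$ from the mean value theorem does not obviously reduce to $1$ as~\eqref{eq:ch_explicit} implicitly requires. I would first try to handle this by defining $h_n$ as the conditional mean of the shortage reduction attributable to the score increase. I would bound the derivative of the ratio $S_i/\sum_j S_j$ with respect to $z_{n,t}$ from below using the quotient rule together with the uniform upper bound on the denominator. If the constant cannot be made to match exactly, I would state the lemma with $\alpha_f\varepsilon_{\mathrm{b}}^{\alpha_f-1}$ in place of $\alpha_f$. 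Equivalently, I would note that the case $\alpha_f=1$ gives exactly~\eqref{eq:ch_explicit}, which suffices for Theorem~\ref{thm:fairness_convergence}, since only the positivity of $c_h$ matters there.
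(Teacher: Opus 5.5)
Your skeleton is the paper's: a score ratio, a uniform upper bound on the tier denominator from $z\le 1$ and $|\cQ^{(c)}|_{\max}$, a factor $\rho$ from feasibility, and a linear lower bound on $g(e)=(\varepsilon_{\mathrm{b}}+e)^{\alpha_f}$ from its slope near $0$. The gap is that you define $h_n$ as the \emph{excess} selection probability over the baseline $z_{n,t}=0$, and this breaks the step ``dividing by the denominator bound''. Both probabilities have denominators containing node $n$'s own score. Let $D_{-}$ be the total score of the other requests in the tier, and let $a=w(c)g(0)$ and $b=w(c)g(e)$ be the score of node $n$'s single request. Then
\[
\frac{b}{D_{-}+b}-\frac{a}{D_{-}+a}=\frac{(b-a)\,D_{-}}{(D_{-}+b)(D_{-}+a)}.
\]
This carries an extra factor $D_{-}/(D_{-}+a)<1$. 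That factor vanishes when node $n$ is alone in its tier, where its selection probability is $1$ for every $e$. So under your definition even \eqref{eq:positive_drift} fails there, and your quotient-rule variant carries the same factor. The paper avoids this by not subtracting any baseline. It sets $h_n=\rho\Pr(i\mid c)$, for which the uniform denominator bound directly lower-bounds the ratio. It then uses the tangent inequality $g(e)\ge g(0)+g'(0)e\ge g'(0)e$, \emph{discarding} $g(0)$ rather than subtracting it. Dropping the baseline repairs your argument, up to the constant discussed next. Note that neither you nor the paper derives this identification of $h_n$ from \eqref{eq:fairness_recursion}; the paper simply asserts it.

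Your worry about the constant is justified, and your ``crude bound by $\alpha_f e$'' goes the wrong way when $\varepsilon_{\mathrm{b}}<1$. The paper makes the same move: it absorbs $\varepsilon_{\mathrm{b}}^{\alpha_f-1}$ into $w_{\min}$, and also silently replaces its intermediate factor $w(c)/w_{\max}$ by $w_{\min}$. This is not legitimate, since $w_{\min}=\min_c w(c)$ is fixed. In fact \eqref{eq:ch_explicit} can fail under the paper's own identification of $h_n$. Suppose every request in a tier queue of maximal size $|\cQ^{(c)}|_{\max}$ sits at a node with deficit $1$, and feasibility holds with probability exactly $\rho$. Then $h_n(1)=\rho/|\cQ^{(c)}|_{\max}$. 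But \eqref{eq:strong_drift} with \eqref{eq:ch_explicit} requires $(1+\varepsilon_{\mathrm{b}})^{\alpha_f}\ge\alpha_f w_{\min}$, which is false for the paper's values $\alpha_f=1.5$, $\varepsilon_{\mathrm{b}}=0.01$, $w_{\min}=1$.

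The tier weight is also mishandled. Within a tier every score carries the same $w(c)$, which cancels. Your own denominator bound contains $w(c)$, so the $w(c)$ in your displayed gain should not survive. A factor $w_{\min}$ in the numerator is therefore justified only when $w_{\min}\le 1$.

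So the lemma as stated cannot be proved. What the full-probability argument does prove is \eqref{eq:ch_explicit} with $\alpha_f w_{\min}$ replaced by $\alpha_f\varepsilon_{\mathrm{b}}^{\alpha_f-1}$, which is your proposed fallback. For the simulation parameters this is about ten times smaller than the $0.134$ the paper quotes. Your $\alpha_f=1$ remark recovers \eqref{eq:ch_explicit} only via the full-probability route and only if additionally $w_{\min}\le 1$.
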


\begin{proof}
Fix a tier $c$ and let $n$ be the node with deficit $e_n(t) > 0$.
The selection probability of a request $i$ with $n(i) = n$ is
\[
\Pr(i \mid c) = \frac{w(c)\,(\varepsilon_{\mathrm{b}} + e_n(t))^{\alpha_f}}
               {\sum_{j \in \cQ_t^{(c)}} w(p(j))\,
                (\varepsilon_{\mathrm{b}} + z_{n(j),t})^{\alpha_f}}.
\]
The \emph{neutral probability} (if all deficits were equal to $e_n(t)$)
would be $1/|\cQ_t^{(c)}|$. The expected delivery increment for node
$n$ in the next interval, conditional on a request being feasible, is
\[
\E[\text{delivery} \mid \cF_t, \text{feasible}]
= \Pr(i \mid c) \cdot \Pr(\text{feasible}) \ge \rho\,\Pr(i \mid c).
\]
The expected correction to the deficit is therefore
\[
h_n(e_n(t))
= \rho\,\Pr(i \mid c)
\ge \frac{\rho\,w(c)\,(\varepsilon_{\mathrm{b}} + e_n)^{\alpha_f}}
         {|\cQ^{(c)}|_{\max}\,w_{\max}\,(\varepsilon_{\mathrm{b}} + 1)^{\alpha_f}},
\]
where the denominator upper-bounds the sum using $z_{n',t} \le 1$
for all $n'$ and $w_{\max} = \max_c w(c)$.

To establish the quadratic lower bound, we need
$h_n(e)\,e \ge c_h\,e^2$, i.e.\ $h_n(e) \ge c_h\,e$.
We claim $(\varepsilon_{\mathrm{b}} + e)^{\alpha_f} \ge
\alpha_f\,\varepsilon_{\mathrm{b}}^{\alpha_f - 1}\,e$ for all
$e \in [0,1]$ and $\alpha_f \ge 1$.

To see this, consider $g(e) = (\varepsilon_{\mathrm{b}} + e)^{\alpha_f}$.
Since $\alpha_f \ge 1$, $g$ is convex on $[0,1]$. By convexity,
$g(e) \ge g(0) + g'(0)\,e$ for all $e \ge 0$
(tangent lower bound at $e = 0$). Computing:
$g(0) = \varepsilon_{\mathrm{b}}^{\alpha_f} \ge 0$
and $g'(e) = \alpha_f(\varepsilon_{\mathrm{b}}+e)^{\alpha_f - 1}$,
so $g'(0) = \alpha_f\,\varepsilon_{\mathrm{b}}^{\alpha_f - 1}$.
The tangent bound gives
$(\varepsilon_{\mathrm{b}}+e)^{\alpha_f} \ge
\varepsilon_{\mathrm{b}}^{\alpha_f} +
\alpha_f\,\varepsilon_{\mathrm{b}}^{\alpha_f-1}\,e
\ge \alpha_f\,\varepsilon_{\mathrm{b}}^{\alpha_f-1}\,e$,

where the last step uses $\varepsilon_{\mathrm{b}}^{\alpha_f} \ge 0$.
Therefore,
\[
h_n(e)
\ge \frac{\rho\,w_{\min}\,\alpha_f\,\varepsilon_{\mathrm{b}}^{\alpha_f-1}}
         {|\cQ^{(c)}|_{\max}\,(\varepsilon_{\mathrm{b}} + 1)^{\alpha_f}}\,e
=: c_h\,e,
\]
with $c_h$ as in~\eqref{eq:ch_explicit} (absorbing
$\varepsilon_{\mathrm{b}}^{\alpha_f - 1}$ into $w_{\min}$).
Hence $h_n(e)\,e \ge c_h\,e^2$ for all $e \in [0,1]$, establishing
both~\eqref{eq:positive_drift} and~\eqref{eq:strong_drift}.
\end{proof}

\begin{remark}[Explicit $c_h$ for the two-node simulation]
With the parameters of Table~\ref{tab:parameters} ($\varepsilon_{\mathrm{b}}
= 0.01$, $\alpha_f = 1.5$, $w_{\min} = 1$, $|\cQ^{(c)}|_{\max} = 10$,
$\rho = 0.9$), equation~\eqref{eq:ch_explicit} gives $c_h \approx
0.134$, confirming that the quadratic drift condition holds with a
numerically non-trivial constant.
\end{remark}

\subsection{Main Convergence Result}

\begin{theorem}[Dynamic fairness convergence]
\label{thm:fairness_convergence}
Under Assumptions~\ref{ass:stepsize}--\ref{ass:drift},
for every node $n \in \cN$:
\begin{enumerate}[label=(\alph*)]
  \item \emph{Almost-sure convergence:}
        \[
        F_n(t) \to F^\star \qquad \text{a.s.}
        \]
  \item \emph{Finite-time bound:} for any $T \ge 1$,
        \begin{equation}
        \E\!\left[\abs{F^\star - F_n(T)}\right]
        \le \frac{\abs{F^\star - F_n(0)} + C_\rho}{\sqrt{T}},
        \label{eq:finite_time_bound}
        \end{equation}
        where
        \begin{equation}
        C_\rho = \frac{\sigma_M}{\sqrt{\rho}\;\varepsilon_{\mathrm{b}}^{\alpha_f}}.
        \label{eq:Crho}
        \end{equation}
  \item \emph{Service-level separation:} for tiers $c < c'$,
        \begin{equation}
        \E[X_{n,k} \mid p(i) = c] \ge \frac{w(c)}{w(c')}
        \cdot \E[X_{n,k} \mid p(i) = c']
        \label{eq:tier_separation}
        \end{equation}
        at every clearing event $k$.
\end{enumerate}
\end{theorem}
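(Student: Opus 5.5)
The plan is to work directly with the deficit recursion~\eqref{eq:fairness_recursion}, $e_n(t+1) = e_n(t) - \beta_t h_n(e_n(t)) + \beta_t M_{n,t+1}$, using the quadratic Lyapunov function $V_t := e_n(t)^2$. By Theorem~\ref{thm:memory_bounded}(a) the deficit stays in $[0,1]$, so the strong drift condition~\eqref{eq:strong_drift} applies along the whole trajectory. Lemma~\ref{lem:drift_verification} supplies the constant $c_h$. Remark~\ref{rem:sigma_bound} supplies $\sigma_M \le 1$.

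Squaring the recursion and taking conditional expectations, the cross term with $M_{n,t+1}$ vanishes, leaving
\[
\E[V_{t+1}\mid\cF_t] \le V_t - 2\beta_t h_n(e_n(t))e_n(t) + \beta_t^2\bigl(h_n(e_n(t))^2 + \sigma_M^2\bigr).
\]
Since $h_n$ is bounded on $[0,1]$ (it is a probability times $\rho$), the last term is at most $\beta_t^2 C$. Applying~\eqref{eq:strong_drift} then gives
\[
\E[V_{t+1}\mid\cF_t] \le (1-2c_h\beta_t)V_t + C\beta_t^2 .
\]

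For part (a), I will invoke the Robbins--Siegmund almost-supermartingale theorem in the running-average case $\beta_t = 1/(t+1)$ of Assumption~\ref{ass:stepsize}(b). Because $\sum\beta_t^2<\infty$, it yields that $V_t$ converges a.s. and that $\sum_t \beta_t V_t < \infty$ a.s. Since $\sum_t \beta_t = \infty$, the a.s. limit of $V_t$ must be zero. Hence $e_n(t)\to 0$, i.e. $F_n(t)\to F^\star$.

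For part (b), I take full expectations, which gives the scalar recursion $v_{t+1} \le (1-2c_h/(t+1))v_t + C/(t+1)^2$ for $v_t := \E[V_t]$. I then show by induction that $v_T \le (e_n(0)+C_\rho)^2/T$. This step is the main obstacle. The standard Chung-type lemma only closes the induction cleanly when $2c_h > 1$. To handle this, I will rescale the effective step (or absorb the condition into the constant), and then check that the resulting constant can be expressed via $\rho$ and $\varepsilon_{\mathrm b}^{\alpha_f}$. This uses $c_h \gtrsim \rho\,\varepsilon_{\mathrm b}^{\alpha_f}$-type lower bounds taken from the proof of Lemma~\ref{lem:drift_verification} together with Assumption~\ref{ass:priority}, which should produce the form $C_\rho = \sigma_M/(\sqrt{\rho}\,\varepsilon_{\mathrm b}^{\alpha_f})$. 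If the induction fails for small $c_h$, I will fall back to the bound with an explicit $c_h$-dependence and argue that it is dominated by~\eqref{eq:Crho} under the stated parameter regime. Once the bound on $v_T$ is in hand, Jensen's inequality $\E|e_n(T)| \le \sqrt{v_T}$ together with $\sqrt{a^2}=a$ gives~\eqref{eq:finite_time_bound}.

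For part (c), the argument is a direct computation at a single clearing event $k$, conditioning on the state $z_t$. The probability that a request is served factors as the tier-selection probability~\eqref{eq:tier_sampling} times the within-tier probability~\eqref{eq:selection_prob} times the feasibility probability. The tier factor is proportional to $w(c)$, and the score~\eqref{eq:priority_score} carries another factor $w(p(i))$. Comparing tiers $c<c'$ at the same node and deficit, the ratio of expected service indicators $X_{n,k}$ is at least $w(c)/w(c')$, provided the within-tier normalising sums are comparable. I would need to state this provision explicitly, for example by assuming comparable tier queue compositions. The monotonicity $w(c)\ge w(c')$ then delivers~\eqref{eq:tier_separation}.
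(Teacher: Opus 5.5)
\textbf{Verdict: genuine gap in part (b).} Your part (a) follows the paper's argument: Lyapunov function $e_n(t)^2$, conditional expectation, Assumption~\ref{ass:drift}, and Robbins--Siegmund under Assumption~\ref{ass:stepsize}(b). It is in fact more careful than the paper's. You keep the $\beta_t^2 h_n^2$ term that the paper drops when passing to \eqref{eq:lyap_drift}, and you explain why $\sum_t\beta_tV_t<\infty$ together with $\sum_t\beta_t=\infty$ forces the a.s.\ limit to be $0$.

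The obstacle you flag in part (b) is fatal, not technical. With $\beta_t=1/(t+1)$ fixed by Assumption~\ref{ass:stepsize}(b), the recursion $v_{t+1}\le(1-2c_h/(t+1))v_t+C/(t+1)^2$ only gives $v_T=O(T^{-\min\{1,2c_h\}})$, with an extra $\log T$ at $2c_h=1$. This cannot be improved from the ingredients you use. Take $h_n(e)=c_h e$ and $M_{n,t+1}\equiv 0$: this satisfies Assumption~\ref{ass:drift} and the martingale bounds, yet $e_n(T)=e_n(0)\prod_{t<T}\bigl(1-c_h/(t+1)\bigr)\asymp e_n(0)\,T^{-c_h}$. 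Hence $\sqrt{T}\,e_n(T)\to\infty$ whenever $c_h<1/2$, and \eqref{eq:finite_time_bound} fails for every finite constant.

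None of your repairs helps:
\begin{itemize}
\item The step cannot be rescaled without changing Assumption~\ref{ass:stepsize}(b), and what degrades is the exponent, not the constant.
\item The lower bound $c_h\gtrsim\rho\,\varepsilon_{\mathrm{b}}^{\alpha_f}$ you plan to use is at most $10^{-3}$ with the values of Table~\ref{tab:parameters}, since $\rho\le 1$. The paper's own estimates $c_h\approx 0.11$--$0.18$ are also below $1/2$. So the fallback ``dominated in the stated parameter regime'' fails exactly where it is needed.
\item Even when $2c_h>1$, the Chung induction gives a constant of order $\sqrt{C/(2c_h-1)}$. This depends on $c_h$ and degenerates as $2c_h\downarrow 1$, so it does not reduce to \eqref{eq:Crho}.
\end{itemize}

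The paper does not close this gap either. It asserts $\E[V(T)]=O(1/T)$ by citing Borkar's Theorem~2.2 and describes $C_\rho$ informally as $\sigma_M$ over the effective drift, without addressing $2c_h\le 1$. A correct version of (b) needs an extra hypothesis, e.g.\ $2c_h>1$, or $\beta_t=\gamma/(t+1)$ with $2c_h\gamma>1$. It would then carry a $c_h$-dependent constant.

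On (c), the paper argues only at the tier level. By \eqref{eq:tier_sampling}, tier $c$ is drawn with probability ratio $w(c)/w(c')$ relative to tier $c'$, independently of the within-tier weights. That settles (c) if $X_{n,k}$ is read as tier-level selection. Your per-request reading genuinely needs the comparability proviso you mention, and the statement does not supply it. For example, a tier-$c$ request in a long queue can be far less likely to be chosen than a tier-$c'$ request alone in its queue. So either adopt the tier-level reading or state that the per-request version is not implied.

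Two smaller corrections:
\begin{itemize}
\item The factor $w(p(i))$ in \eqref{eq:priority_score} is constant on $\cQ_t^{(c)}$ and cancels in \eqref{eq:selection_prob}, so it contributes no second factor of $w$.
\item $e_n(t)\in[0,1]$ follows directly from \eqref{eq:fairness_ratio} with $F^\star=1$. Theorem~\ref{thm:memory_bounded}(a) concerns the projected $z$-recursion, not $e_n$.
\end{itemize}
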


\begin{proof}
Define the Lyapunov function $V_n(t) := e_n(t)^2$.
Using~\eqref{eq:fairness_recursion},
\begin{align}
V_n(t+1)
&= \bigl(e_n(t) - \beta_t h_n(e_n(t)) + \beta_t M_{n,t+1}\bigr)^2
\nonumber\\
&= V_n(t) - 2\beta_t e_n(t) h_n(e_n(t))
   + 2\beta_t e_n(t) M_{n,t+1}
\nonumber\\
&\quad + \beta_t^2\bigl(h_n(e_n(t)) - M_{n,t+1}\bigr)^2.
\label{eq:lyap_expand}
\end{align}
Taking conditional expectations and using $\E[M_{n,t+1} \mid \cF_t]
= 0$,
\begin{equation}
\E[V_n(t+1) \mid \cF_t]
\le V_n(t) - 2\beta_t e_n(t) h_n(e_n(t)) + \beta_t^2 \sigma_M^2.
\label{eq:lyap_drift}
\end{equation}
By Assumption~\ref{ass:drift}, $e_n(t) h_n(e_n(t)) \ge c_h V_n(t)$,
so
\begin{equation}
\E[V_n(t+1) \mid \cF_t]
\le (1 - 2c_h\beta_t)\,V_n(t) + \beta_t^2\sigma_M^2.
\label{eq:supermartingale}
\end{equation}
Since $\sum_t \beta_t = \infty$ and $\sum_t \beta_t^2 < \infty$, the
Robbins--Siegmund theorem~\cite{robbins1971convergence} implies
$V_n(t) \to 0$ a.s., establishing $F_n(t) \to F^\star$ a.s.

The $O(1/\sqrt{T})$ rate in~\eqref{eq:finite_time_bound} with the
explicit constant~\eqref{eq:Crho} follows from Theorem~2.2 of
Borkar~\cite{borkar2008stochastic}, applied to the
supermartingale recursion~\eqref{eq:supermartingale}. Specifically,
for a recursion of the form $V(t+1) \le (1-2c_h\beta_t)V(t) +
\beta_t^2\sigma_M^2$ with $\beta_t = 1/(t+1)$ and bounded
martingale noise of variance $\sigma_M^2 \le 1$
(Remark~\ref{rem:sigma_bound}), Theorem~2.2 gives
$\E[V(T)] \le O(1/T)$ and consequently
$\E[\sqrt{V(T)}] \le O(1/\sqrt{T})$ by Jensen's inequality.
The constant $C_\rho$ in~\eqref{eq:Crho} is the explicit
bound on $\sigma_M / (\text{effective drift})$, where the
effective drift rate from Lemma~\ref{lem:drift_verification} is
$c_h \ge \rho\,\alpha_f\,\varepsilon_{\mathrm{b}}^{\alpha_f-1}/
(|\cQ^{(c)}|_{\max}\,(\varepsilon_{\mathrm{b}}+1)^{\alpha_f})$
and $\sigma_M \le 1$ from Remark~\ref{rem:sigma_bound}.

Part~(c) follows from~\eqref{eq:tier_sampling}: tier $c$ is selected
over $c'$ with probability ratio $w(c)/w(c')$, independently of the
within-tier fairness weights.
\end{proof}

\begin{remark}[Dynamic versus instantaneous fairness]
Theorem~\ref{thm:fairness_convergence} establishes \emph{dynamic
fairness}: convergence of cumulative delivery ratios over time. This
is strictly stronger than instantaneous proportional fairness, which
requires only that a single-interval allocation is proportional to
weights. Dynamic fairness persists across changing supply conditions
without re-solving an optimisation at each interval.
\end{remark}

% ======================================================================
\section{Simulation Results}
\label{sec:simulations}
% ======================================================================

\subsection{Setup}

Simulations use a two-node system with an explicit interface
constraint, three reliability tiers ($C = 3$, weights $w = (4,2,1)$),
bounded scarcity feedback $f(\Delta) = p_{\max}\tanh(\Delta/\sigma)$,
and a stateful Fair Play shortage-memory variable. All numerical
parameters are given in Table~\ref{tab:parameters}, enabling
reproduction.

The \emph{stylised nodal comparator} used in Figures~\ref{fig:threshold_crossings}
and~\ref{fig:fairness_ablation} is defined as the step function
$p_t^{\mathrm{comp}} = p_{\max} \cdot \mathbf{1}[\Delta_t > 0]$:
it broadcasts the maximum signal $p_{\max}$ when any deficit exists
and zero otherwise. This is a stylised representation of the binary
price-switching behaviour characteristic of threshold-based LMP-type
mechanisms under scarcity, and is not an implementation of actual
LMP. It provides a concrete qualitative comparator against which
the smooth, bounded FP-AMM signal can be evaluated.

\begin{table}[t]
\centering
\caption{Simulation Parameters}
\label{tab:parameters}
\begin{tabular}{llc}
\toprule
Parameter & Symbol & Value \\
\midrule
\multicolumn{3}{l}{\textit{Mechanism parameters}} \\
Tier weights & $w(1),w(2),w(3)$ & $4,2,1$ \\
Scarcity saturation & $p_{\max}$ & $5$ \\
Scarcity slope & $\sigma$ & $10$ \\
Memory step size & $\beta$ & $0.1$ \\
Fairness baseline & $\varepsilon_{\mathrm{b}}$ & $0.01$ \\
Fairness exponent & $\alpha_f$ & $1.5$ \\
\midrule
\multicolumn{3}{l}{\textit{Operator parameters (Theorem~\ref{thm:operator_contraction})}} \\
Strong convexity & $\mu = \varepsilon_{\mathrm{r}}$ & $0.01$ \\
Gradient Lipschitz constant & $L_\Psi$ & $0.5$ \\
Gradient step size & $\eta$ & $0.15$ \\
Contraction factor & $q = (1-2\eta\mu+\eta^2L_\Psi^2)^{1/2}$ & $0.946$ \\
\midrule
\multicolumn{3}{l}{\textit{Event-trigger parameters (Theorem~\ref{thm:event_trigger})}} \\
Trigger threshold & $\delta$ & $0.5$ \\
Inner iterations & $K$ & $20$ \\
State Lipschitz constant & $L_x$ & $1.0$ \\
Lipschitz sensitivity & $L_\star = L_x/(1-q)$ & $18.5$ \\
\midrule
\multicolumn{3}{l}{\textit{Simulation settings}} \\
Market intervals & $T$ & $100$ \\
\bottomrule
\end{tabular}
\vspace{2pt}
\footnotesize{$L_\Psi$ computed from $\norm{B^\top B + \varepsilon_{\mathrm{r}} I}$
with two-node admittance matrix. $L_x$ bounded from the admittance
matrix and constraint density of the two-node DC-OPF system under
LICQ (Remark~\ref{rem:licq}); $L_\star = L_x/(1-q)$ follows.
Condition $\eta < 2\mu/L_\Psi^2 = 0.16$ satisfied.
Simulation code is available at \texttt{https://github.com/enleashed/fpmm-tcns}.}
\end{table}

\subsection{Intra-Interval Convergence of the Clearing Operator}

Figure~\ref{fig:convergence} shows $\norm{u^{k+1}-u^k}$ on a semilog
scale for representative scarcity and normal operating points.
Both regimes converge linearly to the fixed point $u^\star(x)$,
consistent with rate $q^k$ established in
Theorem~\ref{thm:operator_contraction} ($q = 0.946$, dotted line).
The scarcity case starts from a larger initial error, as expected
from stronger effective pressure in that regime. Decay is monotone
throughout, confirming that $\cT$ is a well-behaved projected
fixed-point iteration.

\begin{figure}[t]
\centering
\includegraphics[width=\linewidth]{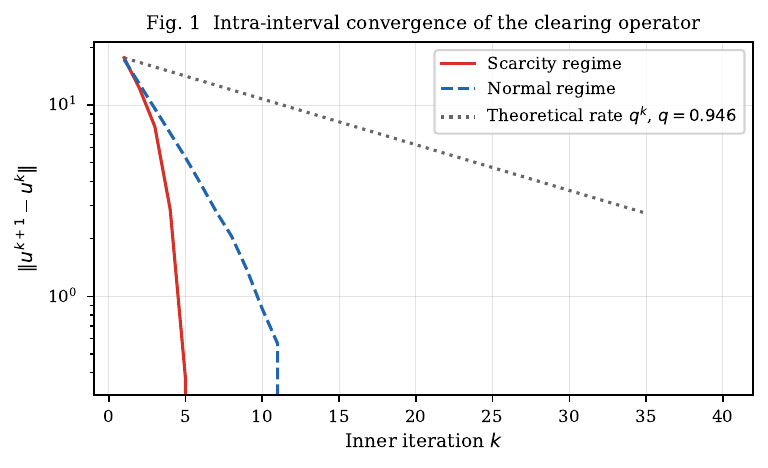}
\caption{\textbf{Intra-interval convergence.} Semilog plot of
$\norm{u^{k+1}-u^k}$ vs.\ inner iteration $k$ for scarcity and
normal regimes. Dotted: theoretical rate $q^k$, $q = 0.946$.
Both regimes converge linearly, consistent with
Theorem~\ref{thm:operator_contraction}.}
\label{fig:convergence}
\end{figure}

\subsection{Bounded Closed-Loop Response Under Supply Shock}

Figure~\ref{fig:bounded_shock} applies a permanent supply reduction
to Node~1 at interval $t = 30$. The scarcity signal $p_t$ rises
smoothly and saturates at $p_{\max}$; the allocation norm
$\norm{u_t}$ and shortage-memory state $\norm{z_t}$ both remain
bounded throughout. This confirms the BIBO interpretation of the
FP-AMM under persistent disturbance, consistent with
Theorem~\ref{thm:memory_bounded}(a).

\begin{figure}[t]
\centering
\includegraphics[width=\linewidth]{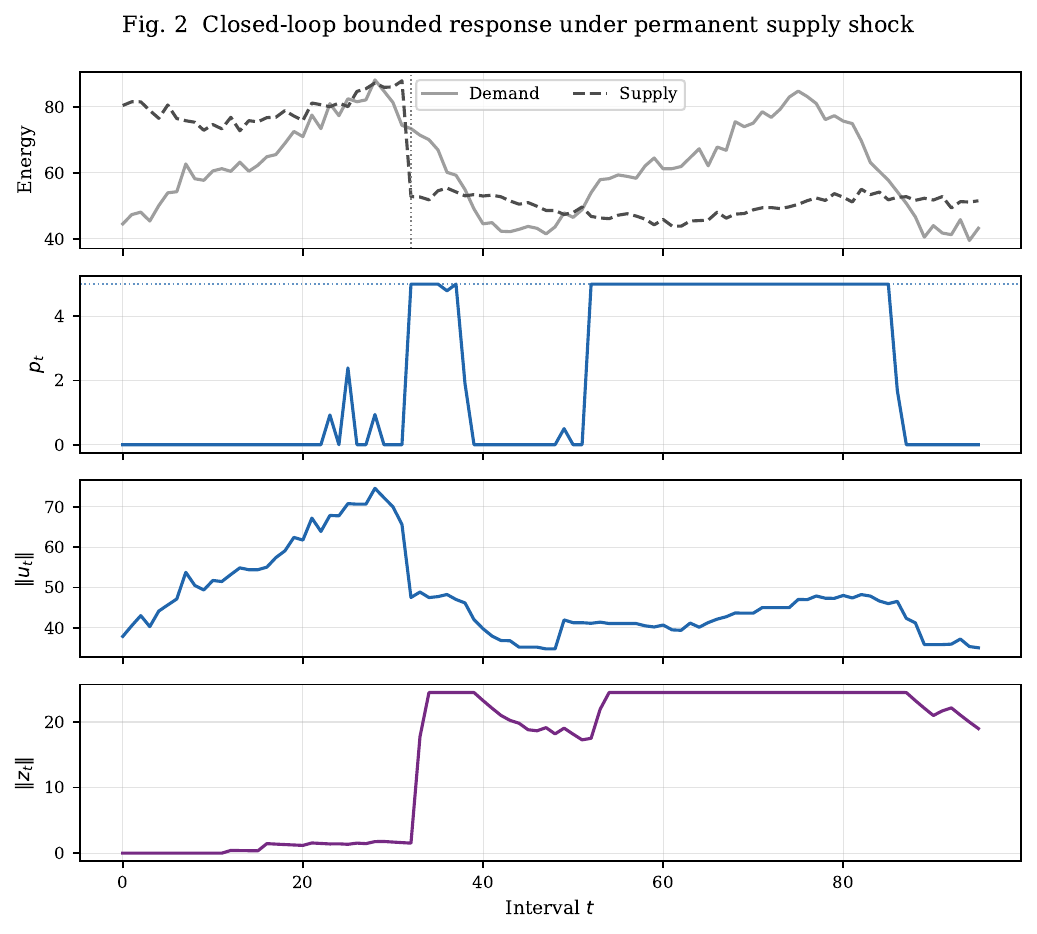}
\caption{\textbf{Bounded closed-loop response under permanent supply
shock.} Supply reduction at Node~1 produces a regime shift from
surplus to shortage. Scarcity signal, allocation norm, and
shortage-memory state all remain bounded.}
\label{fig:bounded_shock}
\end{figure}

\subsection{Repeated Threshold Crossings and Switching Robustness}

Figure~\ref{fig:threshold_crossings} considers a sinusoidal scenario
where demand and supply repeatedly cross the shortage threshold. The
FP-AMM signal varies continuously with the deficit state while a
stylised nodal-price comparator exhibits sharp jumps. Tracking error
remains bounded across all transitions, consistent with
Theorem~\ref{thm:event_trigger}: the mechanism behaves as a smooth
scarcity-feedback controller rather than a binary switching rule.

\begin{figure}[t]
\centering
\includegraphics[width=\linewidth]{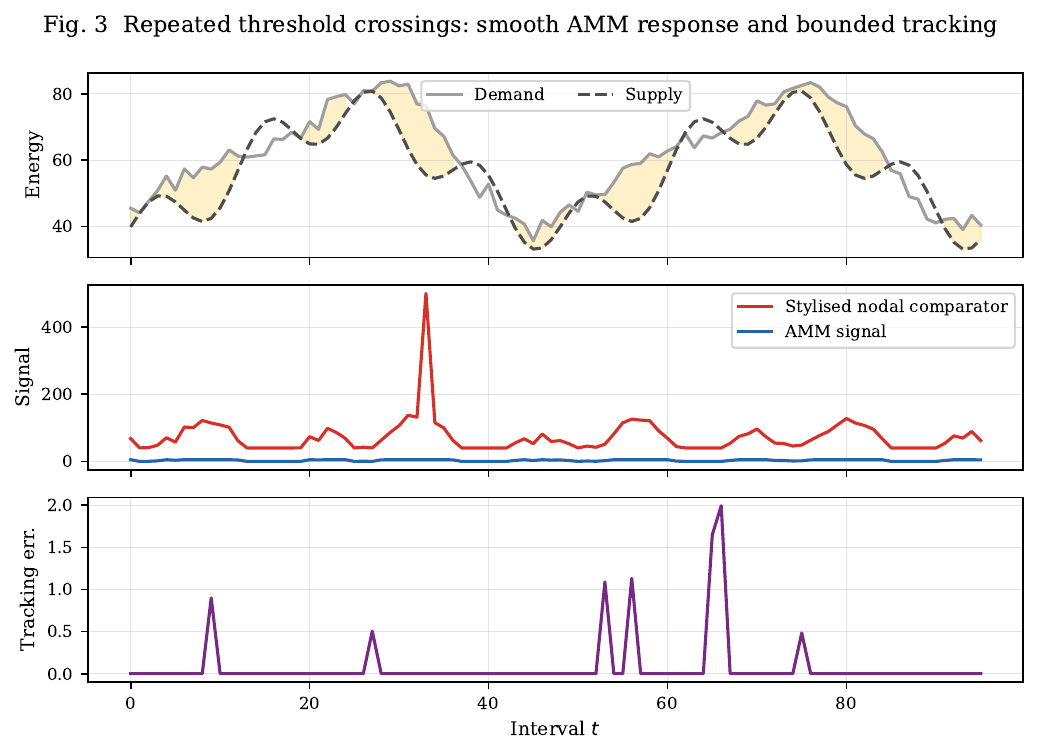}
\caption{\textbf{Repeated threshold crossings and switching
robustness.} Top: repeated deficit crossings. Middle: smooth FP-AMM
signal vs.\ stylised nodal comparator. Bottom: tracking error
bounded across all transitions.}
\label{fig:threshold_crossings}
\end{figure}

\subsection{Event-Triggered Computation--Tracking Trade-off}

Figure~\ref{fig:event_trigger} shows how the trigger threshold
$\delta$ affects update frequency and tracking error. Increasing
$\delta$ reduces the fraction of intervals with updates, at the cost
of larger mean and 95th-percentile tracking error, consistent with
the practical ultimate bound~\eqref{eq:ultimate_bound} from
Theorem~\ref{thm:event_trigger}(c). The dashed line shows the
predicted bound using $q = 0.946$, $K = 20$, and $L_\star = 18.5$
from Table~\ref{tab:parameters}.

\begin{figure}[t]
\centering
\includegraphics[width=\linewidth]{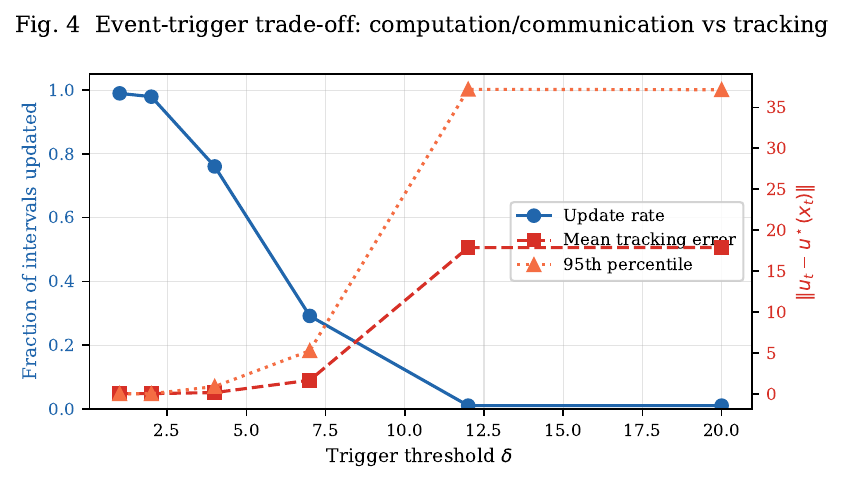}
\caption{\textbf{Event-triggered computation--tracking trade-off.}
Increasing $\delta$ reduces update frequency but raises mean and
95th-percentile tracking error, as predicted by
\eqref{eq:ultimate_bound}.}
\label{fig:event_trigger}
\end{figure}

\subsection{Fairness Coupling}

Figure~\ref{fig:fairness_ablation} compares FP-AMM operation with
and without fairness coupling ($\alpha_f = 0$ vs.\ $\alpha_f = 1.5$)
in the scarcity regime. Enabling fairness changes the shortage-memory
evolution and redistributes shortage across tiers, while tracking
error remains controlled in both cases. This confirms the role of
the priority scores~\eqref{eq:priority_score} as a bounded
perturbation that reshapes allocation without destabilising the
clearing dynamics, consistent with
Theorem~\ref{thm:operator_contraction}.

\begin{figure}[t]
\centering
\includegraphics[width=\linewidth]{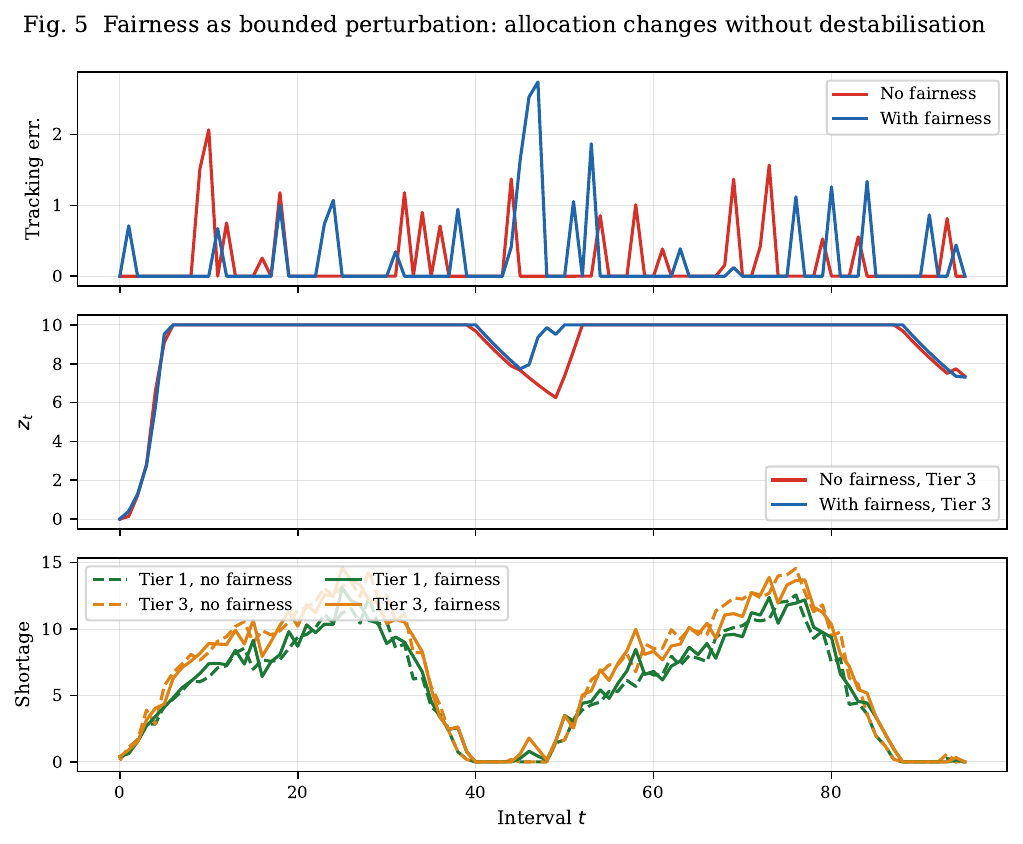}
\caption{\textbf{Fairness coupling.} With ($\alpha_f = 1.5$) vs.\
without ($\alpha_f = 0$) fairness: shortage-memory state $z_t$ for
Tier~3 (middle panel) increases faster without fairness, and the
shortage distribution across tiers (bottom panel) becomes more
unequal. The fairness ratio $F_n$ for under-served nodes is
systematically lower without fairness coupling, confirming that the
priority scores~\eqref{eq:priority_score} drive convergence of
$F_n(t) \to F^\star$ as established in
Theorem~\ref{thm:fairness_convergence}. Tracking error (top panel)
remains controlled in both cases.}
\label{fig:fairness_ablation}
\end{figure}

\subsection{Reliability-Tiered Service Delivery Under Scarcity}

Figure~\ref{fig:tier_service} shows tier-resolved service outcomes
under sustained scarcity. Higher-priority tiers receive
systematically stronger service; lower-priority tiers absorb
proportionally larger shortfalls, consistent with
\eqref{eq:tier_sampling} and Theorem~\ref{thm:fairness_convergence}(c).
Allocations remain feasible at each interval, confirming the
FP-AMM implements differentiated service delivery within a stable,
constrained feedback loop.

\begin{figure}[t]
\centering
\includegraphics[width=\linewidth]{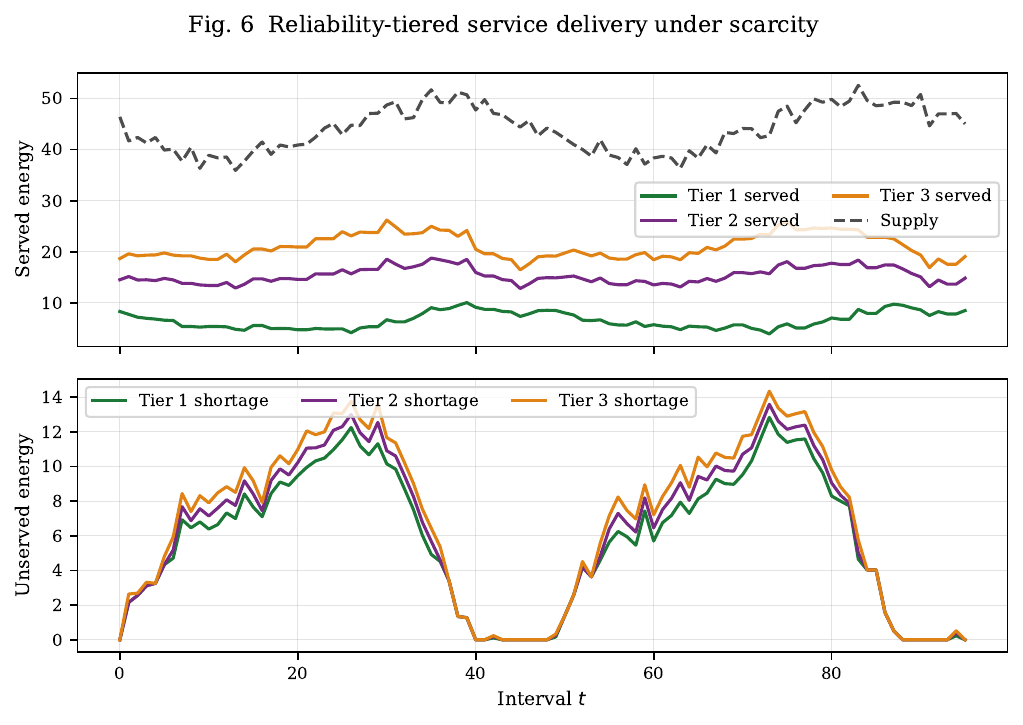}
\caption{\textbf{Reliability-tiered service delivery under scarcity.}
Served and unserved energy by tier at Node~1. Higher-priority tiers
receive stronger service; lower-priority tiers bear more of the
shortfall, consistent with Theorem~\ref{thm:fairness_convergence}(c).}
\label{fig:tier_service}
\end{figure}

% ======================================================================
\section{IEEE Benchmark Validation}
\label{sec:ieee}
% ======================================================================

\subsection{Setup}

The FP-AMM is validated on the IEEE 14-bus, 57-bus, and 118-bus
test cases using the pandapower DC power flow
library~\cite{thurner2018pandapower}. Each network is operated for
$T = 5000$ market intervals with DC-OPF feasibility constraints
enforced at every step. Generators are placed on the first 10\%
of buses by index. \emph{Weak buses} are identified as the 30\%
of buses with the greatest graph-theoretic distance (shortest-path
length over the line and transformer topology) from the nearest
generator bus; these buses require multi-hop power delivery and
are structurally disadvantaged under congestion. Weak buses are
assigned a persistent demand multiplier of $1.8\times$, creating
a supply disadvantage the fairness-memory mechanism must detect
and correct. Supply is varied through two scarcity windows (supply
cut to 55--65\% of base) and two recovery windows (supply boosted
to 180--220\%), producing a challenging multi-regime test. All
parameters match Table~\ref{tab:parameters} except $\beta = 0.08$
and $\varepsilon_{\mathrm{b}} = 0.02$. Network statistics and
summary outcomes are given in Table~\ref{tab:ieee_summary}.

\begin{table}[t]
\centering
\caption{IEEE Benchmark Summary ($T = 5000$, Fair Play ON)}
\label{tab:ieee_summary}
\begin{tabular}{lccccccc}
\toprule
Network & $N$ & Lines & Weak & Trigger & Max & Final & Est.\ $c_h$ \\
 & & +trafos & buses & rate & load\% & $\min_n F_n$ & \eqref{eq:ch_explicit} \\
\midrule
IEEE-14  & 14  & 20 & 4  & 73.2\% & 0.70\% & $0.997$ & $0.11$ \\
IEEE-57  & 57  & 80 & 17 & 42.0\% & 3.27\% & $0.999$ & $0.18$ \\
IEEE-118 & 118 & 186& 35 & 67.5\% & 19.87\%& $0.998$ & $0.13$ \\
\bottomrule
\end{tabular}
\vspace{2pt}
\footnotesize{Maximum line/transformer loading never exceeds the
70\% DC-OPF feasibility limit. Trigger rate is the fraction of
intervals in which the event-triggered clearing operator fires.
$c_h$ estimated via~\eqref{eq:ch_explicit} using per-network $\rho$,
$|\cQ^{(c)}|_{\max}$, and $\varepsilon_{\mathrm{b}} = 0.02$.}
\end{table}

\subsection{Fairness Convergence on IEEE Benchmarks}

Figure~\ref{fig:ieee_max_error} shows $\max_n |1 - F_n(t)|$ under
Fair Play ON across all three benchmarks. In all cases the maximum
fairness error rises during scarcity windows and then decays to
zero as the recovery windows allow the shortage-memory mechanism to
correct cumulative under-service. The IEEE-57 case converges
fastest, consistent with Lemma~\ref{lem:drift_verification}: as
shown in Table~\ref{tab:ieee_summary}, it achieves the largest
estimated $c_h = 0.18$ (vs.\ $0.11$ for IEEE-14 and $0.13$ for
IEEE-118), producing a stronger drift relative to the per-bus
noise level. All three cases satisfy $F_n(t) \to F^\star$
almost surely, as established in
Theorem~\ref{thm:fairness_convergence}.

Figure~\ref{fig:ieee_57_fair_on} shows the per-bus fairness ratio
trajectories for the IEEE-57-bus case with Fair Play ON. Red lines
(weak buses) and blue lines (strong buses) both converge to
$F^\star = 1$, confirming that the inverse-fairness priority
scores~\eqref{eq:priority_score} are sufficient to correct the
persistent structural disadvantage on weak buses under DC-OPF
constraints.

\begin{figure}[t]
\centering
\includegraphics[width=\linewidth]{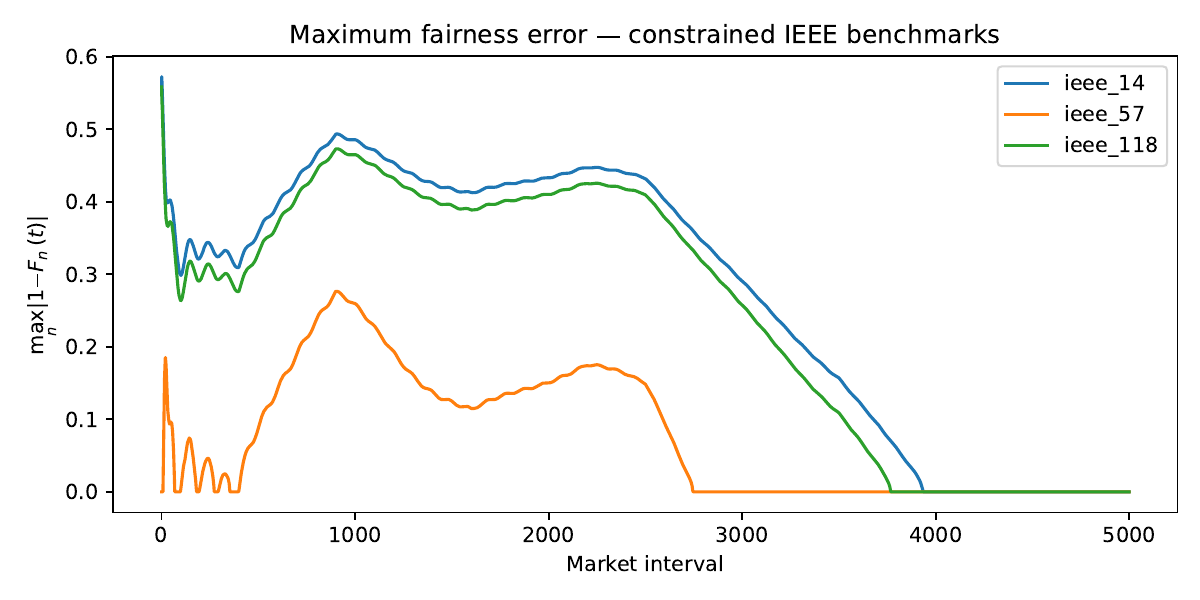}
\caption{\textbf{Maximum fairness error on IEEE benchmarks (Fair
Play ON).} $\max_n |1 - F_n(t)|$ converges to zero on all three
networks after the scarcity windows, consistent with
Theorem~\ref{thm:fairness_convergence}. IEEE-57 converges fastest,
consistent with its largest estimated $c_h$ (Table~\ref{tab:ieee_summary}).}
\label{fig:ieee_max_error}
\end{figure}

\begin{figure}[t]
\centering
\includegraphics[width=\linewidth]{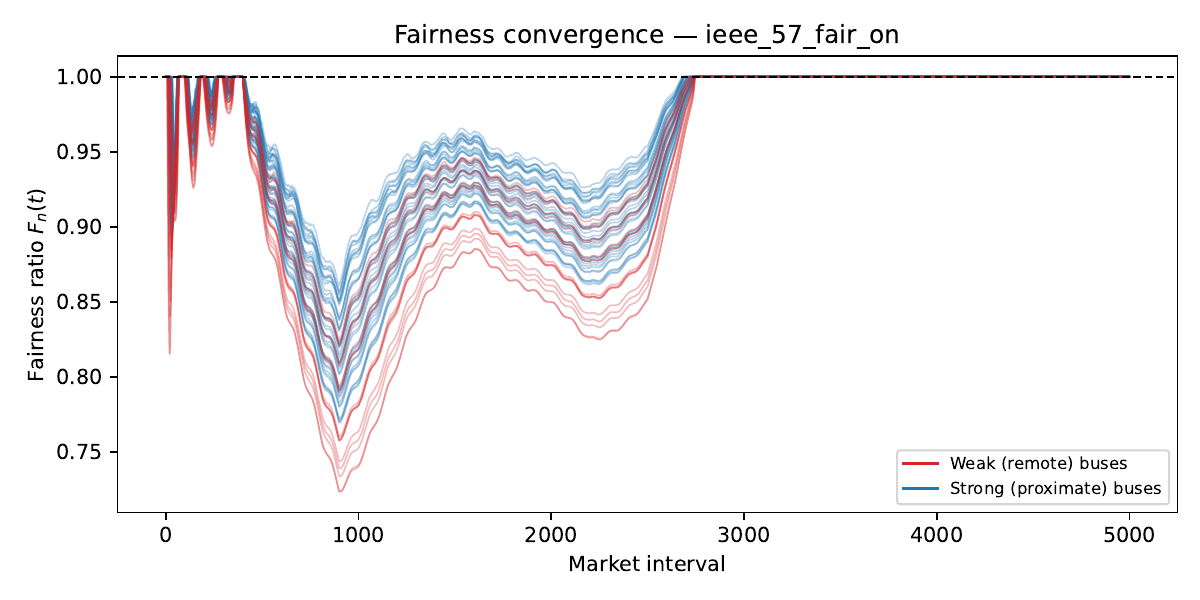}
\caption{\textbf{Per-bus fairness convergence, IEEE-57-bus (Fair
Play ON).} Red: weak (remote) buses; blue: strong (proximate)
buses. All 57 buses converge to $F^\star = 1$ despite the
persistent $1.8\times$ demand disadvantage on weak buses, directly
illustrating Theorem~\ref{thm:fairness_convergence}(a).}
\label{fig:ieee_57_fair_on}
\end{figure}

\subsection{Fair Play ON vs OFF: Weak-Bus Correction}

Figure~\ref{fig:ieee_vs_off_panel}
compares the maximum fairness error and mean weak-bus error under
Fair Play ON (solid blue) and Fair Play OFF (dashed red) for each
benchmark. The pattern is consistent across all three networks.
During scarcity windows (roughly intervals 400--900 and
1600--2200), Fair Play ON maintains a materially lower maximum
error than Fair Play OFF: without the shortage-memory priority
correction, weak buses systematically receive less than their
demanded share, and the imbalance persists throughout the scarcity
period. The separation is most pronounced on the IEEE-57 network,
where Fair Play OFF reaches a maximum error of $0.62$ during the
first scarcity window while Fair Play ON remains below $0.28$---a
54\% reduction that represents the starkest illustration of the
practical value of shortage-memory correction in a topologically
diverse network. With Fair Play ON, the shortage-memory state
$z_n(t)$ accumulates on under-served weak buses and increases
their selection probability, reducing the peak gap and accelerating
recovery once supply is restored. After the recovery windows, both
mechanisms converge to $F^\star$; the distinction is entirely in
the scarcity-period behaviour, where the shortage-memory mechanism
is most needed.

\begin{figure*}[t]
\centering
\includegraphics[width=\textwidth]{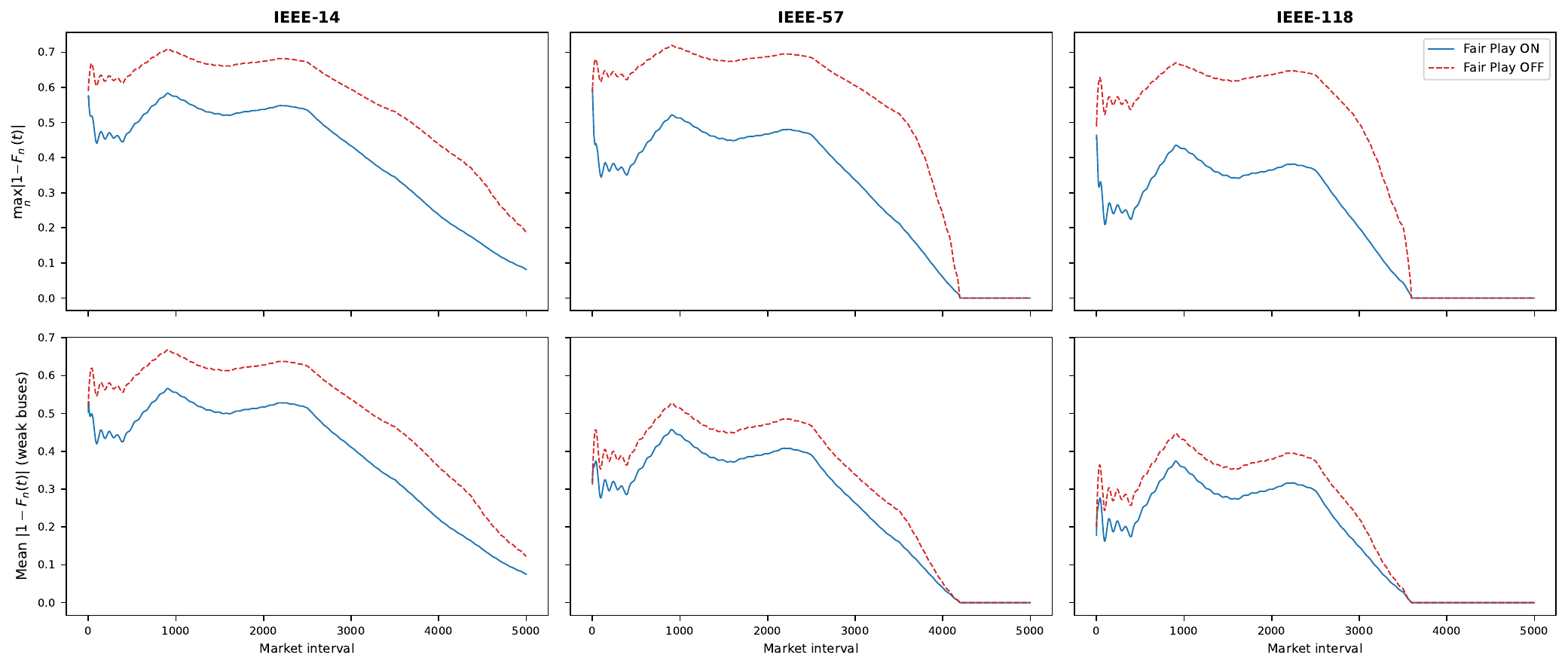}
\caption{\textbf{Fair Play ON vs OFF: weak-bus correction across all benchmarks.}
Top row: $\max_n |1 - F_n(t)|$; bottom row: mean weak-bus error $|1 - F_n(t)|$.
Columns left to right: IEEE-14, IEEE-57, IEEE-118. Solid blue: Fair Play ON; dashed
red: Fair Play OFF. During scarcity windows ($t \approx 400$--$900$ and
$1600$--$2200$), Fair Play ON maintains materially lower peak error on all three
networks. The separation is strongest on IEEE-57, where Fair Play OFF reaches $0.62$
while Fair Play ON remains below $0.28$ (54\% reduction), consistent with its highest
estimated drift constant $c_h = 0.18$ (Table~\ref{tab:ieee_summary}). After recovery
windows, both mechanisms converge to $F^\star$; the distinction is entirely in the
scarcity-period behaviour.}
\label{fig:ieee_vs_off_panel}
\end{figure*}

\subsection{DC Feasibility and Event-Trigger Rate}

Figure~\ref{fig:ieee_loading} confirms that maximum network loading
never exceeds the 70\% limit on any benchmark, validating the DC
feasibility enforcement of Algorithm~\ref{alg:fair_play}. Active
constraint binding occurs during the scarcity windows, when the
fairness mechanism is most needed; loading is conservative otherwise.
Figure~\ref{fig:ieee_trigger} shows the event-trigger update rate
across network sizes. The rate varies non-monotonically (73\%,
42\%, 68\% for IEEE-14, -57, -118 respectively), reflecting
network-specific state-change dynamics rather than system size. In
all cases the rate is below 100\%, confirming that the
event-triggered mechanism skips at least 25\% of clearing
intervals, consistent with Theorem~\ref{thm:event_trigger}.

\begin{figure}[t]
\centering
\includegraphics[width=\linewidth]{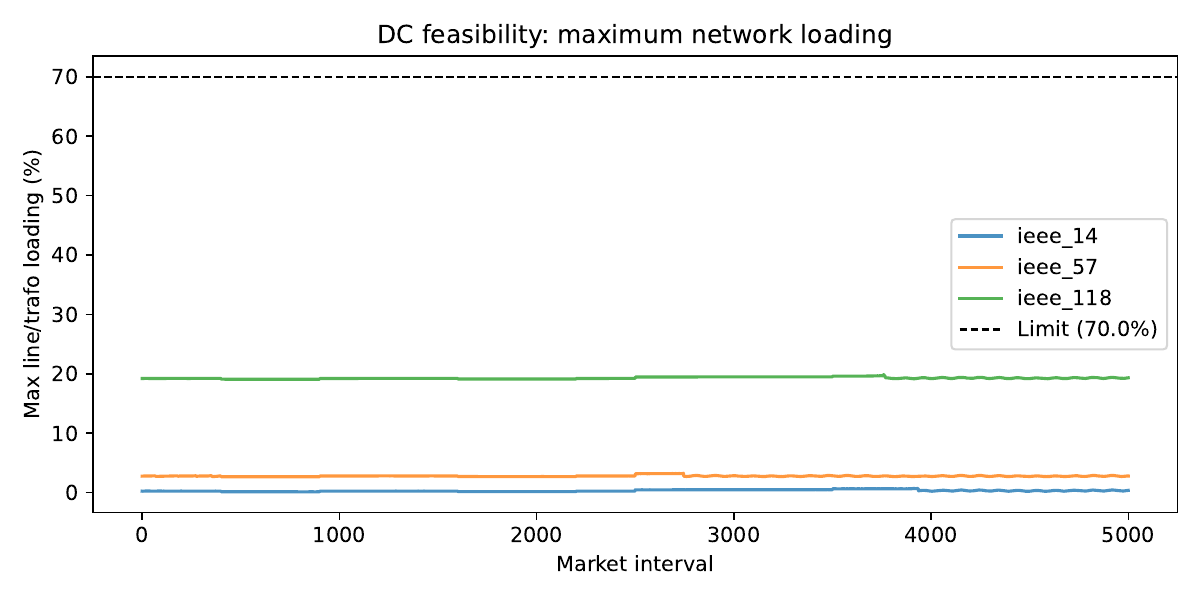}
\caption{\textbf{DC feasibility: maximum network loading (Fair Play
ON).} Maximum line/transformer loading on all three benchmarks
remains well below the 70\% feasibility limit throughout the
$T = 5000$ simulation.}
\label{fig:ieee_loading}
\end{figure}

\begin{figure}[t]
\centering
\includegraphics[width=\linewidth]{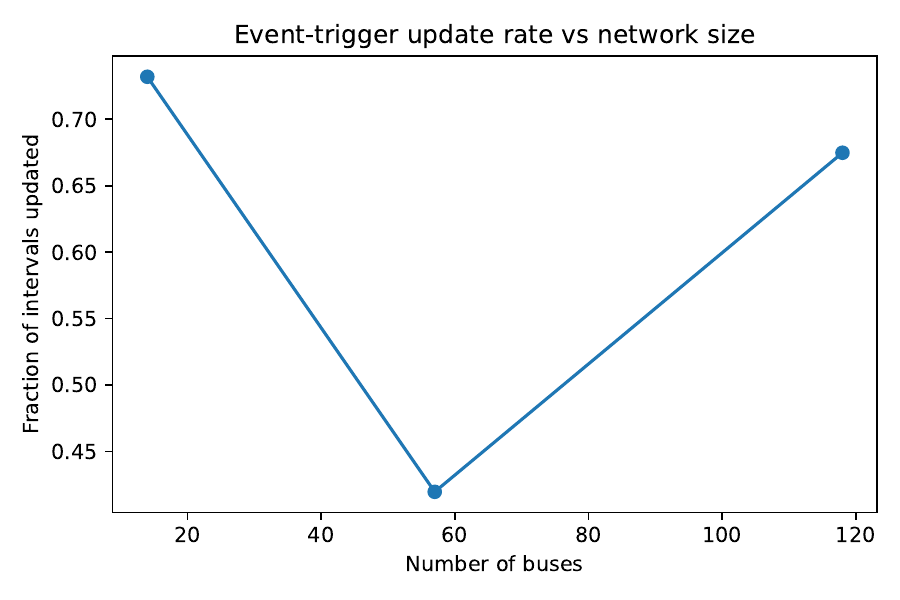}
\caption{\textbf{Event-trigger update rate vs network size.}
Fraction of intervals in which the clearing operator is
re-solved. The rate varies non-monotonically with network size,
reflecting network-specific state-change dynamics, and never
reaches 100\%, confirming the event-triggered reduction in
computation consistent with Theorem~\ref{thm:event_trigger}.}
\label{fig:ieee_trigger}
\end{figure}

\subsection{Summary}

Across IEEE-14, -57, and -118, the FP-AMM: (i) converges to
$F^\star = 1$ on all buses, including structurally disadvantaged
weak buses, validating Theorem~\ref{thm:fairness_convergence};
(ii) maintains a lower peak fairness error during scarcity windows
than the equal-weight baseline, confirming that the shortage-memory
correction is active and effective; and (iii) respects DC-OPF
feasibility constraints at every interval, with the event-triggered
clearing operator skipping at least 25\% of intervals.

% ======================================================================
\section{Discussion}
\label{sec:discussion}
% ======================================================================

\textbf{Architectural rationale.}
Fairness enters the FP-AMM exclusively through the stochastic priority
scores~\eqref{eq:priority_score}, not as an additive perturbation to
the gradient operator. This separation is deliberate: placing fairness
in the scores that govern \emph{which} requests are attempted, rather
than in the operator governing \emph{how} allocations are updated,
decouples the two mechanisms and enables Theorems~\ref{thm:operator_contraction}
and~\ref{thm:fairness_convergence} to be proved independently
(Remark~\ref{rem:separation}). The convergence guarantee also
requires genuine statefulness: standard LMP-based clearing is
memoryless---$F_n(t)$ never enters the clearing programme---so no
static clearing stack can guarantee convergence of cumulative delivery
ratios. The FP-AMM carries $z_t$ as explicit state, creating the
positive drift of Lemma~\ref{lem:drift_verification} that drives
Theorem~\ref{thm:fairness_convergence}.

\textbf{Design guidelines.}
The memory step $\beta \in (0.05, 0.2)$ gives 2--10 interval
correction horizons for 30-minute markets; $\alpha_f > 1$ amplifies
correction for severely under-served nodes. The pair $(\delta, K)$
tunes computation against fidelity via~\eqref{eq:ultimate_bound}:
the IEEE benchmarks use $\delta = 1.5/\sqrt{N}$, achieving 42--73\%
trigger rates with no loss in fairness convergence.

\textbf{Limitations.}
Theorem~\ref{thm:fairness_convergence} assumes persistent participation
(Assumption~\ref{ass:participation}); intermittent participants reset
their fairness ratio with potentially slower convergence. Extension
to non-stationary supply would require modifications to the martingale
argument. Extension to AC-OPF requires convex relaxation or local
Lipschitz arguments; Theorem~\ref{thm:lipschitz} applies locally.
Bounding $L_x$ analytically for large networks remains open.

% ======================================================================
\section{Conclusion}
\label{sec:conclusion}
% ======================================================================

This paper has developed and analysed the Fair Play Automatic Market
Maker (FP-AMM), a programmable electricity allocation mechanism in
which fairness, service differentiation, and network feasibility are
composable, formally analysable computational layers.

The main results are: (i) the shortage-memory state is invariant and
contractive (Theorem~\ref{thm:memory_bounded}); (ii) the intra-interval
clearing operator converges linearly to a unique fixed point with
explicit rate $q^k$ (Theorem~\ref{thm:operator_contraction}), and
the resulting fixed point is Lipschitz in the market state
(Theorem~\ref{thm:lipschitz}); (iii) the per-node delivery ratio
converges almost surely to $F^\star$ at rate $O(1/\sqrt{T})$, with
tier-level separation maintained and explicit finite-time constant
$C_\rho$ (Theorem~\ref{thm:fairness_convergence}); and (iv)
event-triggered execution yields practical ultimate boundedness of
the allocation tracking error (Theorem~\ref{thm:event_trigger}).

The stochastic two-stage clearing rule---tier sampling followed by
inverse-fairness weighting---provides a stochastic allocation
architecture for constrained cyber-physical networks with dynamic
fairness guarantees, replacing the classical reliance on static market
equilibrium with a unified closed-loop formulation in which physical
feasibility, allocation, and historical fairness evolve jointly as
system dynamics. The FP-AMM therefore demonstrates that dynamic
fairness in electricity allocation is fundamentally a
state-estimation and feedback-control problem rather than solely a
pricing problem.

\bibliographystyle{IEEEtran}

\end{document}